\newcounter{dummy} 
\theoremstyle{plain}
\newtheorem{thm}[dummy]{Theorem}
\newtheorem{lem}[dummy]{Lemma}
\newtheorem{dusl}[dummy]{Corollary}
\newtheorem{obs}[dummy]{Observation}
\DeclareMathOperator{\lcm}{\textup{lcm}}
\DeclareMathOperator{\N}{\mathbb{N}} 
\DeclareMathOperator{\Z}{\mathbb{Z}} 
\DeclareMathOperator{\R}{\mathbb{R}} 
\newcommand{\ord}[1]{\textup{ord}_{\mathbb{Z}_{#1}}}
\newcommand{\ordd}[2]{\textup{ord}_{\mathbb{Z}_{#1} \times \mathbb{Z}_{#2}}}
\newcommand{\T}[2]{T_{{#1} \times {#2}}}
\newcommand\myeq[1]{\stackrel{\mathclap{\mbox{#1}}}{=}}
\newcommand*{\email}[1]{%
    \normalsize\href{mailto:#1}{\texttt{#1}}\par
    }
\title{No-three-in-line problem on a torus: periodicity}
\author{Michael Skotnica\thanks{Department of Applied Mathematics, Charles University in Prague, Malostransk\'e n\'am\v{e}st\'i 25, 118 00, Praha 1.}}
\affil{\email{skotnica@kam.mff.cuni.cz}}
\date{}
\begin{document}
\maketitle
\begin{abstract}
	Let $\tau_{m,n}$ denote the maximal number of points on the discrete torus (discrete toric grid)
	of sizes $m \times n$ with no three collinear points. The value $\tau_{m,n}$ is known for the case where 
	$\gcd(m,n)$ is prime. It is also known that $\tau_{m,n} \leq 2\gcd(m,n)$.

	In this paper we generalize some of the known tools for determining $\tau_{m,n}$ and also show some new.
	Using these tools we prove that the sequence $(\tau_{z,n})_{n \in \mathbb{N}}$ is
	periodic for all fixed $z > 1$. In general, we do not know the period; however, if $z =
	p^a$ for $p$ prime, then we can bound it. We prove that $\tau_{p^a,p^{(a-1)p+2}} = 2p^a$
	which implies that the period for the sequence is $p^b$, where $b$ is at most $(a-1)p+2$.
\end{abstract}

\section{Introduction}
	In 1917, Dudeney introduced the \textbf{No-three-in-line-problem} (see problem 317 in \cite{Dudeney}). 
	The goal of this problem is to place as many points as possible on an $n \times n$ grid so that no three points are collinear.
	This problem is still not solved for all $n \in \N$.
	
	In 2012, Fowler, Groot, Pandya and Snapp introduced a modification of this problem:
	\textbf{No-three-in-line-problem on a torus} (see \cite{Fowler}). The purpose is to place as many points as possible on a discrete
	torus of size $m \times n$, where $m,n \in \N$ so that no three of these points are collinear. 
	
	We consider the \emph{discrete torus} as a Cartesian product $\{0, \ldots, m-1\} \times \{0, \ldots, n-1\}  \subset \Z^2$ and we denote it $\T{m}{n}$. 
	A \emph{line} on this torus is an image of a line $\ell$ in $\Z^2$
	under a mapping $\pi_{m,n}$ which maps a point $(x,y) \in \Z^2$ to the 	point $(x \mod{m}, y \mod{n})$.\footnote{
	As a line in $\Z^2$ we consider $\ell_{\R^2} \cap \Z^2$, where $\ell_{\R^2}$ is a line in $\R^2$ which
	contains at least two points of integer coordinates.}
	
	We are interested in estimating the quantity $\tau_{m,n}$ which denotes the maximal number of points that can be placed on the torus $\T{m}{n}$ so that
	no three of these points are collinear.\footnote{From another point of view, we may identify  $\T{m}{n}$ with the group 
	$\Z_m \times \Z_n$. Then lines are cosets of cyclic subgroups $\Z_m \times \Z_n$ generated by $(x,y)$ where $x,y$ are relatively
	prime.}
	
	Fowler et al. in \cite{Fowler} proved that $\tau_{p,p} = p + 1$ and $\tau_{p,p^2} = 2p$ where $p$ is an odd prime. They also showed $\tau_{m,n} = 2$
	whenever $\gcd(m,n) = 1$, where $\gcd$ denotes the greatest common divisor. This fact follows directly from the Chinese remainder theorem.
	These results were generalized by Misiak, St\c{e}pie\'{n}, A. Szymaszkiewicz, L. Szymaszkiewicz and Zwierzchowski  \cite{Misiak} who determined $\tau_{m,n}$ 
	whenever $\gcd(m,n) = p$ is a prime. More concretely, they have shown in this case that $\tau_{m,n} = 2p$ if $m$ or $n$ is divisible by $p^2$ or if $p = 2$ and
	that $\tau_{m,n} = p + 1$ if neither $m$ nor $n$ is divisible by $p^2$ and $p$ is an odd prime; see Theorem 1.2. in \cite{Misiak}.
	
	Misiak et al. also provided a useful general upper bound for $\tau_{m,n}$.
	
	\begin{thm}[Theorem 1.1 in \cite{Misiak}]\label{general_upper}
		Let $m,n \in \N$. Then $\tau_{m,n} \leq 2 \gcd(m,n)$.
	\end{thm}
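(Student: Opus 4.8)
The plan is to cover the torus $\T{m}{n}$ by exactly $\gcd(m,n)$ lines; since any point set with no three collinear points meets every line in at most two points, such a cover forces at most $2\gcd(m,n)$ points altogether.

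To build the cover I would use the diagonal direction. Working with the group model $\T{m}{n}\cong\Z_m\times\Z_n$, let $H=\langle(1,1)\rangle$ be the cyclic subgroup generated by $(1,1)$. Since $1$ has order $m$ in $\Z_m$ and order $n$ in $\Z_n$, we get $\ordd{m}{n}(1,1)=\lcm(m,n)$, hence $|H|=\lcm(m,n)$ and the number of cosets of $H$ is $\frac{mn}{\lcm(m,n)}=\gcd(m,n)$. Each coset $(a,b)+H$ is $\pi_{m,n}\!\left(\{(a+t,\,b+t):t\in\Z\}\right)$, the image of the slope-$1$ line through $(a,b)$ in $\Z^2$, and so is a line on $\T{m}{n}$ in the sense of the definition; moreover these $\gcd(m,n)$ cosets partition $\T{m}{n}$.

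Finally, let $S\subseteq\T{m}{n}$ contain no three collinear points. Each of the $\gcd(m,n)$ diagonal lines above contains at most two points of $S$, and together they cover $\T{m}{n}$, so $|S|\le 2\gcd(m,n)$; choosing $S$ of maximum size yields $\tau_{m,n}\le 2\gcd(m,n)$. I expect no genuine obstacle here: the only things to verify are the routine identity $\ordd{m}{n}(1,1)=\lcm(m,n)$ and the remark that a coset of $\langle(1,1)\rangle$ really arises as $\pi_{m,n}(\ell)$ for a line $\ell\subset\Z^2$ through two lattice points. The one step with any content is the choice of the diagonal direction, which maximizes the order of the generating element and therefore minimizes the number of lines in the partition; any direction $(u,v)$ with $\gcd(u,m)=\gcd(v,n)=1$, e.g. $(1,-1)$, would work just as well.
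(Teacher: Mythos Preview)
Your argument is correct: the cosets of $\langle(1,1)\rangle$ are precisely $\gcd(m,n)$ lines partitioning $\T{m}{n}$, and the bound follows. Note, however, that the present paper does not give its own proof of this statement; it is quoted as Theorem~1.1 of \cite{Misiak} and used as a black box, so there is no in-paper proof to compare against. Your covering argument is in fact the standard one (and essentially the one in \cite{Misiak}): partition the torus by the parallel lines in a fixed primitive direction whose generator has order $\lcm(m,n)$, then count.
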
	
	\paragraph{Results of the paper: Periodicity.} We contribute to the study of $\tau_{m,n}$ by showing that once we fix one of the coordinates, then we get a periodic sequence. Namely,
	for a positive integer $z$, we define the sequence $\sigma_z = (\sigma_z(n))_{n = 1}^\infty$ by setting $\sigma_z(n) := \tau_{z,n}$.
	Note that $\sigma_z(n) \leq 2n$ for arbitrary $n$ by Theorem~\ref{general_upper}. In particular, $\sigma_z$ attains only finitely many values.	
	We get following:
	
	\begin{thm}\label{general_period}
		The sequence $\sigma_z$ is periodic for all positive integers $z$ greater than 1.
	\end{thm}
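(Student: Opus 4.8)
The plan is to reduce the statement to the claim that $\tau_{z,n}$ depends on $n$ only through finitely many ``truncated'' $p$-adic valuations, and then to observe that such a dependence is automatically periodic in $n$. Throughout I identify $\T{z}{n}$ with the group $\mathbb{Z}_z\times\mathbb{Z}_n$ as in the footnotes; the basic tool is the following translation-invariant description of collinearity, which I would record as a lemma (it refines the order-type arguments of \cite{Misiak}): three distinct points $P_1,P_2,P_3$ of $\T{z}{n}$ are collinear if and only if the subgroup $\langle P_2-P_1,\,P_3-P_1\rangle$ of $\mathbb{Z}_z\times\mathbb{Z}_n$ is cyclic. One direction is immediate because a line through $0$ is itself a cyclic subgroup; for the other, any element of $\mathbb{Z}_z\times\mathbb{Z}_n$ lies in the line generated by the primitivization of one of its integer lifts, so a cyclic subgroup is always contained in some line through $0$, and hence the three points lie on a common coset of it.

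From this I would extract two lemmas. First, \emph{monotonicity}: if $n\mid n'$ then $\tau_{z,n}\le\tau_{z,n'}$; here one does not even need the characterization, since the reduction map $\rho\colon\mathbb{Z}_z\times\mathbb{Z}_{n'}\to\mathbb{Z}_z\times\mathbb{Z}_n$ sends every line onto a line, so lifting an optimal configuration for $(z,n)$ to $\T{z}{n'}$ via the set inclusion $\{0,\dots,n-1\}\subseteq\{0,\dots,n'-1\}$ cannot create three collinear points. Second, \emph{irrelevance of the coprime part}: if $q$ is a prime with $\gcd(q,zn)=1$, then $\tau_{z,nq}=\tau_{z,n}$; the inequality ``$\ge$'' is monotonicity, and for ``$\le$'' one writes $\mathbb{Z}_z\times\mathbb{Z}_{nq}\cong\mathbb{Z}_z\times\mathbb{Z}_n\times\mathbb{Z}_q$ by the Chinese remainder theorem and lets $\rho$ be the projection killing the last coordinate. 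Since every subgroup of $\mathbb{Z}_z\times\mathbb{Z}_n$ has order coprime to $q$, the difference-group $H$ of a triple splits as $\rho(H)\times\pi_q(H)$; hence for a triple with pairwise distinct $\rho$-images, cyclicity of $H$ is equivalent to cyclicity of $\rho(H)$, while if two points of the triple share a $\rho$-image then $\pi_q(H)=\mathbb{Z}_q$ and $\rho(H)$ is generated by one element, so $H$ is cyclic. Consequently, in an optimal configuration $S$ with $|S|\ge 3$ no two points can share a $\rho$-image (they would be collinear with any third point), and then $\rho(S)$ is a no-three-collinear set in $\T{z}{n}$ of the same size; the case $|S|\le 2$ is trivial because $\tau_{z,n}\ge 2$ for $z>1$.

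Iterating the second lemma gives $\tau_{z,n}=\tau_{z,\prod_{p\mid z}p^{v_p(n)}}$, so, writing $p_1,\dots,p_r$ for the primes dividing $z$, the sequence $\sigma_z$ is entirely determined by the function $h(e_1,\dots,e_r):=\tau_{z,\,p_1^{e_1}\cdots p_r^{e_r}}$ on $\mathbb{N}_0^r$. By monotonicity $h$ is nondecreasing for the product order, and by Theorem~\ref{general_upper} it is bounded by $2z$, hence takes finitely many values; for each value $v$ the up-set $\{\mathbf{e}:h(\mathbf{e})\ge v\}$ has finitely many minimal elements by Dickson's Lemma. Taking $N$ to be the largest coordinate occurring among all these (finitely many) minimal elements, $h(\mathbf{e})$ depends only on $(\min(e_i,N))_i$, and therefore $\tau_{z,n}$ depends only on the tuple $(\min(v_{p_i}(n),N))_{i=1}^r$. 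Since $\min(v_p(n),N)$ is periodic in $n$ with period $p^{N+1}$, the sequence $\sigma_z$ is periodic with period dividing $\prod_{i=1}^r p_i^{N+1}$.

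The heart of the argument is the coprime-irrelevance lemma: proving the translation-invariant characterization of collinearity cleanly and then tracking the Chinese-remainder decomposition carefully — in particular disposing of the small configurations (where the hypothesis $z>1$ is genuinely used) and of the case of a repeated $\rho$-image. Monotonicity is soft, and the final assembly is a standard well-quasi-order argument; note that it is non-constructive, which is exactly why the general statement yields only the existence of a period. When $z=p^a$ is a prime power one can replace Dickson's Lemma by a direct analysis of $\tau_{p^a,p^k}$, which pins down the relevant truncation level and gives the explicit bound $b\le(a-1)p+2$.
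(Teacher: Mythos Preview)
Your route is sound and genuinely different from the paper's, but one step needs a small patch. The coprime-irrelevance lemma, as you state it for a single prime $q$ with $\gcd(q,zn)=1$, does not literally iterate to strip a coprime part with repeated prime factors: if the coprime part of $n$ contains $q^2$, then after one removal the remaining second coordinate is still divisible by $q$, so the hypothesis $\gcd(q,z\cdot(\text{new }n))=1$ fails. The fix is immediate from your own argument, since nothing in it uses primality of $q$ beyond the facts that $\mathbb{Z}_q$ is cyclic and that $\gcd(q,zn)=1$; replacing $q$ by any integer $m$ with $\gcd(m,zn)=1$, the splitting $H=(H\cap A)\times(H\cap\mathbb{Z}_m)$ still holds and $H\cap\mathbb{Z}_m$ is automatically cyclic as a subgroup of a cyclic group. (In particular, in the repeated-$\rho$-image case you do not need $\pi_m(H)=\mathbb{Z}_m$, only that this factor is cyclic.) With this cosmetic change the ``iteration'' becomes a single application.

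The substantive difference from the paper is how this reduction is obtained. The paper deduces its version (Lemma~\ref{Lem_prevseq}) from the stronger Theorem~\ref{Thm_prev}, whose proof (Lemma~\ref{lemma_1}) shows that preimages of lines under $\T{xm}{yn}\to\T{m}{n}$ are again lines by manufacturing a suitable primitive direction vector on the large torus, and this construction invokes Dirichlet's theorem on primes in arithmetic progressions. Your characterization ``three points are collinear iff their difference subgroup is cyclic'' bypasses all of that: once $H$ splits along the coprime-order decomposition, cyclicity is preserved and reflected by $\rho$, and the observation that two points with equal $\rho$-image are collinear with any third point drops out for free. This is more elementary (no Dirichlet) and more conceptual; the price is that you get a weaker lemma than Theorem~\ref{Thm_prev}, which the paper also exploits independently. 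The closing well-quasi-order step is the same in both proofs: your formulation via Dickson's Lemma and truncated valuations is exactly the content of the paper's Lemma~\ref{m_z} and Observation~\ref{obs:sequence}, stated more crisply.
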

	
	Theorem~\ref{general_period} in principle allows to determine the values $\tau_{z,n}$ for arbitrary big $n$ from several initial values of $n$ (the number of initial values for $n$
	depends on $z$, of course). Unfortunately, for a general $z$ we do not know what is the period. Our proof is purely existential. However, if $z$ is a power
	of a prime, we can say more. We show that in such case the sequence reaches its potential maximum $2z = 2p^a$ for $a \in \N$ and a prime $p$.
	
	\begin{thm}\label{max_p}
		Let $\T{p^a}{p^{(a-1)p+2}}$ be a torus where $p$ is a prime and $a \in \N$. Then $\tau_{p^a,p^{(a-1)p+2}} = 2p^a$.
	\end{thm}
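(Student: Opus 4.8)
The plan is to establish the two inequalities $2p^a\le\tau_{p^a,p^{(a-1)p+2}}\le 2p^a$. The upper one is immediate from Theorem~\ref{general_upper}: as $(a-1)p+2-a=(a-1)(p-1)+1\ge 1$, we get $\gcd\bigl(p^a,p^{(a-1)p+2}\bigr)=p^a$, hence $\tau_{p^a,p^{(a-1)p+2}}\le 2p^a$. So the work is the construction of $2p^a$ points on $\T{p^a}{p^{(a-1)p+2}}$ with no three collinear. Its shape is forced: a column $\{x\}\times\Z_{p^{(a-1)p+2}}$ is a line (the coset of the cyclic subgroup generated by the primitive vector $(0,1)$) and so holds at most two points; with $p^a$ columns and $2p^a$ points every column holds \emph{exactly} two, say $(x,f(x))$ and $(x,g(x))$ with $f(x)\ne g(x)$. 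I will use throughout the reformulation (cf.\ the footnote in the introduction): $P_1,P_2,P_3$ are collinear iff the subgroup of $\Z_{p^a}\times\Z_{p^{(a-1)p+2}}$ generated by $P_2-P_1$ and $P_3-P_1$ is cyclic, equivalently iff some integer lifts of $P_2-P_1$ and $P_3-P_1$ are parallel in $\Z^2$; unwinding this expresses a putative collinearity as the divisibility of a $2\times2$ determinant in the coordinate differences — with each first coordinate allowed an integer perturbation by a multiple of $p^a$ and each second coordinate by a multiple of $N:=p^{(a-1)p+2}$ — by $N$ times a gcd of the (perturbed) first-coordinate differences.

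The construction is by induction on $a$. The base case $a=1$ is $\tau_{p,p^2}=2p$ (Fowler et al.\ \cite{Fowler}; for $p=2$ it is the instance $\tau_{2,4}=4$ of \cite{Misiak}). For the step put $N=p^{(a-1)p+2}$, $N'=p^{(a-2)p+2}$, so $N/N'=p^{p}$, and let $\rho\colon\Z_{p^a}\times\Z_{N}\to\Z_{p^{a-1}}\times\Z_{N'}$ be the coordinatewise reduction; being a group homomorphism it carries every line to a line or a single point. Taking a no-three-in-line set $T$ of $2p^{a-1}$ points downstairs, I build $S$ so that it \emph{lies over} $T$: above each $(u,v)\in T$ I put one point in each of the $p$ columns $u+p^{a-1}j$, $0\le j<p$, at height $v+N'\ell$ for a digit $\ell\in\{0,\dots,p^{p}-1\}$ still to be chosen. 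Then $\lvert S\rvert=2p^a$ and each column has two points. If three points of $S$ were collinear, their $\rho$-images would be collinear or would merge; if they stay three distinct points of $T$ we contradict the hypothesis on $T$. So every possible collinear triple of $S$ lies over \emph{at most two} points of $T$, i.e.\ is confined to one or two of the $p\times p^{p}$ blocks over points of $T$, and the task is to pick the digits so that no such triple is collinear.

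For a triple inside one block the situation is clean: its first coordinates all lie in $u+p^{a-1}\Z$ and its heights in $v+N'\Z$, so dividing $p^{a-1}$ out of the first-coordinate differences and $N'$ out of the second shows it is collinear in $\T{p^a}{N}$ iff the corresponding triple of digit points $(j,\ell)$ is collinear in $\T{p}{p^{p}}$; taking the same no-three-in-line set on $\T{p}{p^{p}}$ in every block — e.g.\ $\ell=j^{2}$ for odd $p$, nothing to check when $p=2$ — disposes of all these. The triples straddling two blocks, and those lying on lines that wrap around the torus in the first coordinate, impose only finitely many further non-divisibility conditions on the digits — finitely many because $\Z_{p^a}\times\Z_N$ has finitely many cyclic subgroups, hence finitely many lines — and a valuation computation shows each such condition is non-degenerate, excluding only some of the $p^{p}$ possible values of a relevant digit difference; this computation works exactly because the modulus jumps by the factor $p^{p}$ at this step (so digits live in $\Z_{p^{p}}$), which is why the exponent grows from $(a-2)p+2$ to $(a-1)p+2$. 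A greedy/counting argument over the at most $\binom{2p^a}{3}$ triples and the finitely many lines then produces a global choice of digits defeating every collinearity, completing the step.

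The step I expect to be the main obstacle is this last valuation bookkeeping: enumerating cleanly which lines (cyclic subgroups and their cosets) can pass through three of the points, and checking case by case that the divisibility defining collinearity never holds identically in the digits, so the greedy choice never gets stuck; this is where the specific exponent $(a-1)p+2$ earns its keep, and I would not expect it to be tight (it is claimed only as an exponent at which a construction exists, which is all that is needed for the period bound). Two further points need attention: one should carry along in the induction the property that the two points of every column differ by a unit modulo the height-modulus — true in the base case, and preserved because $N'\mid N$ and the increment $N'\ell$ is divisible by $p$ — so that the unique line through those two points is that column and every ``two in one column plus a third point'' triple is automatically non-collinear; and the small cases (in particular $p=2$, or small $a$, where blocks have few columns and some valuation slacks vanish) should be checked directly.
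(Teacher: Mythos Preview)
Your overall architecture matches the paper: the upper bound is Theorem~\ref{general_upper}, and the construction is built inductively by lifting a no-three-in-line set on $\T{p^{a-1}}{p^{(a-2)p+2}}$ along the reduction $\rho$, placing $p$ points over each old point. Your reduction of the within-block triples to $\T{p}{p^{p}}$ is also correct.

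The gap is the cross-block case, which you yourself flag as ``the main obstacle'' and then do not resolve. The sentence ``A greedy/counting argument over the at most $\binom{2p^a}{3}$ triples and the finitely many lines then produces a global choice of digits'' is not a proof, and as stated it does not work. When you come to choose the digit for the last point, it participates in $\Theta(p^{2a})$ triples of the dangerous kind (two points over one base point, one over another); even if each such triple excluded only one value of the digit, for fixed $p$ and $a$ large you have far more exclusions than the $p^{p}$ available digit values. So a naive greedy argument cannot close the induction, and you have not given any structural reason (overlap of the excluded sets, etc.) that would rescue it. The phrase ``the specific exponent $(a-1)p+2$ earns its keep'' is exactly where a proof is owed and none is supplied.

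The paper does not argue by counting at all. It writes down one explicit lift --- the $i$-th copy of $X_{a-1}\cup Y_{a-1}$ is shifted by $(ip^{a-1},\,p^{(a-2)p+i+3})$, so the height shifts have \emph{distinct} $p$-adic valuations rather than being an arbitrary no-three-in-line pattern on $\T{p}{p^{p}}$ --- and then verifies non-collinearity by a finite case analysis. The ``$X$ versus $Y$'' mixed cases are killed by a direct determinant computation modulo $p^{a}$ (Lemma~\ref{lem_det}(2)); the ``all in $X_a$'' (or all in $Y_a$) cases are handled by projecting to carefully chosen smaller tori and comparing line \emph{lengths} via Corollary~\ref{cor:lines} and Lemma~\ref{lastP_l3}, together with Lemma~\ref{lem_more_lines} when both coordinate differences are divisible by $p$. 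The inequality $(a-2)p+4-(r-t)\ge a$ that makes these length comparisons go through is where the exponent $(a-1)p+2$ is actually used. If you want to push your approach through, you would need to replace the greedy step by an explicit choice of digits (the staircase of $p$-powers is the natural one) and then carry out this kind of case analysis; the counting shortcut is not available.
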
		
	
	We can also determine the period of $\sigma_{p^a}$.
	
	\begin{thm}\label{prime_period}
		Let $p$ be a prime, $a \in \N$. Let us denote $m :=\min\{x; \sigma_{p^a}(x) = 2p^a\}$. Then $m = p^b$ for some $b \geq a$ and the sequence $\sigma_{p^a}$ is
		periodic with the~period $m$.
	\end{thm}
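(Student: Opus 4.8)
The plan is to combine Theorem~\ref{max_p} with a monotonicity-type argument for divisibility along the sequence $\sigma_{p^a}$, together with the general upper bound from Theorem~\ref{general_upper}. First I would argue that the set $S := \{x \in \N; \sigma_{p^a}(x) = 2p^a\}$ is nonempty: by Theorem~\ref{max_p} we have $p^{(a-1)p+2} \in S$, so $m = \min S$ is well-defined and $m \le p^{(a-1)p+2}$. Since $\sigma_{p^a}(x) \le 2\min\{p^a,x\} \le 2p^a$ always (by Theorem~\ref{general_upper}), hitting $2p^a$ really is hitting the maximum, and any $x$ with $x < p^a$ cannot lie in $S$ since then $\sigma_{p^a}(x) \le 2x < 2p^a$; hence $m \ge p^a$.

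Next I would show $m$ is a power of $p$. The key claim is a multiplicativity/divisibility fact: if $d \mid n$ then the torus $\T{p^a}{d}$ embeds (as a quotient, via $\pi$) into $\T{p^a}{n}$ in a way that lets one lift a no-three-in-line configuration, so $\sigma_{p^a}(d) \le \sigma_{p^a}(n)$ whenever $d \mid n$ — more precisely one wants that a collinearity-free set on $\T{p^a}{n}$ of size $2p^a$ forces, after projecting to $\T{p^a}{d}$ for the right divisor $d$, a collinearity-free set of the same size, so that $2p^a$ is already attained at a proper divisor of $n$ unless $n$ itself is "minimal" in the divisibility order among elements of $S$. Combined with $p^{(a-1)p+2} \in S$, whose only divisors are powers of $p$, this pins down $m = p^b$; and $m \ge p^a$ gives $b \ge a$. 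This is where I expect to lean on the new tools the paper promises in the introduction (the generalizations of the order/periodicity machinery indexed by $\ord{}$ and $\ordd{}{}$), so the cleanest route is probably to prove first that $S$ is closed under taking "relevant" divisors and multiples within the $p$-power chain.

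Finally, for periodicity with period exactly $m$: I would show (i) $\sigma_{p^a}(n) = 2p^a$ for every multiple $n$ of $m$ — this is the upward-closure of $S$ under multiples, using the same lifting idea as above, now in the easy direction (a configuration on $\T{p^a}{m}$ pulls back along the natural projection $\T{p^a}{n} \to \T{p^a}{m}$ to a configuration of the same size, since lines map to lines and the fibers are controlled); and (ii) for $n$ not a multiple of $m$, that $\sigma_{p^a}(n)$ depends only on $n \bmod m$, which should follow from the structural description of maximal configurations on $\Z_{p^a}\times\Z_n$ in terms of the orders $\ord{n}$ of the relevant group elements — these orders, hence the combinatorics of lines, stabilize modulo the $p$-part of $n$ once the $p$-part is $\ge$ the $p$-part of $m$. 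The main obstacle is step (ii): proving that the value $\tau_{p^a,n}$ for $n$ \emph{not} divisible by $m$ is genuinely $m$-periodic, rather than merely eventually periodic, requires the sharpened tools to give an exact (not asymptotic) stabilization of the line structure; this is the technical heart, and I would isolate it as a lemma stating that for $n \equiv n' \pmod{m}$ the tori $\T{p^a}{n}$ and $\T{p^a}{n'}$ have isomorphic "collinearity hypergraphs" on the relevant sub-configurations, which immediately yields $\sigma_{p^a}(n) = \sigma_{p^a}(n')$ and completes the proof that $m$ is a period; minimality of the period then follows because $\sigma_{p^a}(m) = 2p^a$ while $\sigma_{p^a}(x) < 2p^a$ for $0 < x < m$ by definition of $m$.
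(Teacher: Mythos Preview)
Your overall architecture is right---establish that $m$ exists via Theorem~\ref{max_p}, show $m$ is a $p$-power, then prove $p^b$ is a period---but you have not identified the actual tool that drives both of the latter two steps, and the substitutes you propose are either vague or false.

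The paper's engine is Theorem~\ref{Thm_prev} (specialized as Lemma~\ref{Lem_prevseq}): if $\gcd(x,p)=1$ then $\sigma_{p^a}(xm)=\sigma_{p^a}(m)$. In words, $\sigma_{p^a}(n)$ depends only on the $p$-adic valuation $v_p(n)$. With this in hand, everything is short. For $m=p^b$: write $m=hp^b$ with $p\nmid h$; then $\sigma_{p^a}(p^b)=\sigma_{p^a}(hp^b)=2p^a$, contradicting minimality unless $h=1$. For periodicity: if $1\le x<p^b$ then $v_p(x)=v_p(x+\alpha p^b)$, so $\sigma_{p^a}(x)=\sigma_{p^a}(x+\alpha p^b)$ directly from Lemma~\ref{Lem_prevseq}; if $x=p^b$ then $\sigma_{p^a}(x+\alpha p^b)\ge \sigma_{p^a}(p^b)=2p^a$ by Lemma~\ref{Lem_geqseq} and is $\le 2p^a$ by Theorem~\ref{general_upper}. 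No hypergraph isomorphisms are needed.

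By contrast, your route to $m=p^b$ via ``$S$ is closed under divisors'' and ``project a size-$2p^a$ configuration down to $\T{p^a}{d}$'' does not work as stated: projection can collapse distinct points, so you cannot conclude the image still has $2p^a$ points without exactly the coprimality hypothesis that Theorem~\ref{Thm_prev} isolates (and proving that the projection is injective on the configuration and that preimages of lines are lines \emph{is} essentially proving Theorem~\ref{Thm_prev}). More seriously, your proposed lemma for step (ii)---that $n\equiv n'\pmod m$ implies $\T{p^a}{n}$ and $\T{p^a}{n'}$ have isomorphic collinearity structures---is false in general (e.g.\ $p=3$, $a=1$, $m=9$, $n=1$, $n'=10$: the tori have $3$ and $30$ points respectively), so it cannot be the mechanism. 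What is true, and much weaker, is that $v_p(n)=v_p(n')$ whenever $n\equiv n'\pmod{p^b}$ and $v_p(n)<b$, and that single observation together with Lemma~\ref{Lem_prevseq} is all you need.
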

	Note that Theorem~\ref{max_p} implies that such an $m$ exists. In fact, we show that each $x$ such that $\sigma_{p^a}(x) = 2p^a$ is a~period
	for $\sigma_{p^a}$. Therefore, from these two theorems above we get that $\sigma_{p^a}$ is periodic with the (not necessarily least) period $p^{(a-1)p+2}$.
	
	\paragraph{Tools.} A useful tool for determining $\tau_{m,n}$ is to find out for which values $m,n,x,y \in \N$ we have $\tau_{m,n} = \tau_{xm,yn}$. Misiak
	et al. showed  $\tau_{p,p} = \tau_{xp,yp}$  if $p$ is odd prime, $x, y$ are relatively prime and neither $x$ nor $y$ is divisible by $p$. In other
	words $\tau_{p,p} = \tau_{a,b}$ if $\gcd(a,b) = p$ and neither $a$ nor $b$ is divisible by $p^2$; see Theorem 4.5 in \cite{Misiak}.  
	
	We generalize this result to the following one.
	\begin{thm}\label{Thm_prev}
		Let $m, n, x, y$ be positive integers such that $m,n$ are not both 1 and $\gcd(x,y) = \gcd(m,y) = \gcd(n,x) = 1$.
		Then $\tau_{xm,yn} = \tau_{m,n}$. 
	\end{thm}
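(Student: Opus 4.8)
The plan is to prove the two inequalities $\tau_{xm,yn}\ge\tau_{m,n}$ and $\tau_{xm,yn}\le\tau_{m,n}$ separately, the first being easy and requiring no coprimality, the second carrying all the content. For the lower bound, consider the homomorphism $\psi\colon \Z_m\times\Z_n\to\Z_{xm}\times\Z_{yn}$, $\psi(a,b)=(xa,yb)$. It is injective, and the lifts to $\Z^2$ of $\psi(P)$ are precisely the coordinate-wise rescalings $(u,v)\mapsto(xu,yv)$ of the lifts of $P$. Since such a rescaling of $\Z^2$ multiplies every relevant $2\times 2$ determinant by $xy$, it neither creates nor destroys collinearity of triples; hence a triple in $\Z_m\times\Z_n$ is collinear if and only if its $\psi$-image is collinear in $\T{xm}{yn}$. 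Thus $\psi$ sends a no-three-in-line set to one of the same cardinality, and $\tau_{xm,yn}\ge\tau_{m,n}$.

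For the upper bound I would use the reduction homomorphism $\phi\colon \Z_{xm}\times\Z_{yn}\to\Z_m\times\Z_n$, $\phi(a,b)=(a\bmod m,\ b\bmod n)$. It is surjective, satisfies $\phi\circ\pi_{xm,yn}=\pi_{m,n}$ (so it maps lines onto lines and collinear triples onto possibly-degenerate collinear triples), and has kernel $H=m\Z_{xm}\times n\Z_{yn}$, which by $\gcd(x,y)=1$ equals the cyclic group $\langle(m,n)\rangle$ of order $xy$. The core of the argument is the following claim: \emph{if $\gcd(x,y)=\gcd(m,y)=\gcd(n,x)=1$ and $Q_1,Q_2,Q_3\in\T{xm}{yn}$ are such that $\phi(Q_1),\phi(Q_2),\phi(Q_3)$ lie on a common line of $\T{m}{n}$ (coincidences among them allowed), then $Q_1,Q_2,Q_3$ are collinear in $\T{xm}{yn}$.} Granting this, take an optimal no-three-in-line $S\subseteq\T{xm}{yn}$. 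If $|S|\le 2$ then $\tau_{xm,yn}=|S|\le 2\le\tau_{m,n}$, since $mn\ge 2$ gives $\tau_{m,n}\ge 2$. If $|S|\ge 3$ and two distinct points $Q_1,Q_2\in S$ had $\phi(Q_1)=\phi(Q_2)$, then for any third $Q_3\in S$ the claim (whose hypothesis is then vacuously satisfied) would make $Q_1,Q_2,Q_3$ collinear, contradicting the choice of $S$; hence $\phi|_S$ is injective. Now if three distinct points of $\phi(S)$ were collinear in $\T{m}{n}$, the claim applied to their unique preimages in $S$ would again give a forbidden collinear triple. So $\phi(S)$ is a no-three-in-line set of size $|S|$, whence $\tau_{m,n}\ge|S|=\tau_{xm,yn}$, and equality follows.

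It remains to prove the claim, which is the step I expect to be the main obstacle. Translate so that $Q_1=(0,0)$, and use the hypothesis to fix a primitive vector $\vec w\in\Z^2$ and integers $t_2,t_3$ (and $t_1=0$) for which $\widehat q_i:=t_i\vec w$ is a lift of $\phi(Q_i)$; the $\widehat q_i$ are collinear in $\Z^2$ by construction. I would then search for lifts $\widetilde q_i$ of $Q_i$ to $\T{xm}{yn}$ of the form $\widetilde q_i=\widehat q_i+(m\Lambda_i,\,n M_i)$, where $\Lambda_i$ must lie in the residue class modulo $x$ forced by requiring the first coordinate of $\widetilde q_i$ to agree with $Q_i$ modulo $xm$, and $M_i$ likewise modulo $y$ from the second coordinate. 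Substituting into the vanishing-$2\times 2$-determinant criterion for collinearity and cancelling the determinant relation already satisfied by the $\widehat q_i$, one is left with a single bilinear equation in the $\Lambda_i,M_i$. Here $\gcd(n,x)=1$ lets the prescribed residues modulo $x$ be attained while controlling divisibility by $n$ in the first coordinate, $\gcd(m,y)=1$ does the dual in the second, and $\gcd(x,y)=1$ lets the two families of congruences be imposed simultaneously via the Chinese remainder theorem; together with the freedom of choosing the witnessing line and lifts (to arrange that the relevant $\gcd$ divides the right-hand side), this reduces the existence of $\widetilde q_i$ to a solvable linear Diophantine equation. This Diophantine solvability — realising an ``abstract'' collinearity in the coarser torus by an actual collinearity in the finer one respecting the prescribed residues — is exactly where all three coprimality hypotheses are consumed.

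Finally, the degenerate case $\phi(Q_1)=\phi(Q_2)$ of the claim says that any two points of a common $\phi$-fibre are collinear with every point of $\T{xm}{yn}$; since such a fibre lies inside a single line (because $H=\langle(m,n)\rangle$ and $\gcd(x,y)=1$ make it a cyclic-subgroup coset generated by the lift $(m,n)$), this instance can be established by the same argument in a slightly simpler form, and it also immediately yields the intermediate fact that $S$ meets each $\phi$-fibre in at most two points.
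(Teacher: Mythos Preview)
Your overall architecture matches the paper's: both route the upper bound through the reduction map $f=\phi\colon\T{xm}{yn}\to\T{m}{n}$, observe that $f$ carries lines to lines, and reduce $\tau_{xm,yn}\le\tau_{m,n}$ to a statement of the form ``collinearity downstairs forces collinearity upstairs.'' The paper packages this as Lemma~\ref{Misiak_1} together with Lemma~\ref{lemma_1} (the preimage under $f$ of every line on $\T{m}{n}$ is a single line on $\T{xm}{yn}$); your three-point claim is exactly the consequence of that lemma that is needed, so the reduction is sound. Your lower bound via the scaling embedding $\psi(a,b)=(xa,yb)$ is also fine, though the paper instead quotes Lemma~\ref{thm_geq}.

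The genuine gap is in your proof of the claim itself. After translating $Q_1$ to the origin and fixing $\widehat q_i=t_i\vec w$, asking for collinear lifts $\widetilde q_i=\widehat q_i+(m\Lambda_i,nM_i)$ with prescribed residues $\Lambda_i\bmod x$ and $M_i\bmod y$ amounts to solving
\[
nw_1(t_2M_3-t_3M_2)+mw_2(t_3\Lambda_2-t_2\Lambda_3)+mn(\Lambda_2M_3-\Lambda_3M_2)=0
\]
over $\Z$. This is \emph{bilinear} in the four constrained unknowns, and your sentence ``this reduces the existence of $\widetilde q_i$ to a solvable linear Diophantine equation'' is precisely where the argument stops being a proof: the three coprimality hypotheses do not by themselves linearise the equation, and the ``freedom of choosing the witnessing line and lifts'' you invoke is the hard part, not a triviality. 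You identify the obstacle correctly but do not overcome it.

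The paper resolves this step in a different and more structural way, and the idea is worth knowing. Instead of lifting three points simultaneously, Lemma~\ref{lemma_1} constructs a single \emph{direction vector} $(u^\ast,v^\ast)$ on $\T{xm}{yn}$ whose line is the entire preimage $f^{-1}(\ell)$. One writes $u^\ast=u+im$, $v^\ast=v+jn$ and uses Dirichlet's theorem on primes in arithmetic progressions to choose $i,j$ so that $u^\ast/\gcd(u,m)$ and $v^\ast/\gcd(v,n)$ are distinct primes larger than $mnxy$. This arithmetic trick forces $\gcd(v^\ast m,u^\ast n)=\gcd(v^\ast xm,u^\ast yn)$ (exactly here all three hypotheses $\gcd(x,y)=\gcd(m,y)=\gcd(n,x)=1$ are consumed), after which a single Chinese-remainder step shows every point of $f^{-1}(\ell)$ lies on the line generated by $(u^\ast,v^\ast)$. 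So the missing ingredient in your sketch is a concrete mechanism---in the paper, Dirichlet's theorem---that produces a direction with controlled prime factorisation; without it, or a genuine substitute that actually solves your bilinear equation, the proof is incomplete.
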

	
	Misiak et al. used the following idea in \cite{Misiak}.  Let us consider tori $\T{xm}{yn}$ and $\T{m}{n}$, 
	where $m,n,x,y \in \N$, a~mapping $f: \T{xm}{yn} \rightarrow \T{m}{n}$ defined by
	$f((a_1,a_2)) = (a_1 \mod{m}, a_2 \mod{n})$, and the~mappings $\pi_{xm, yn}$, $\pi_{m,n}$ defined above from $\Z^2$ to
	$\T{xm}{yn}$, $\T{m}{n}$, respectively. Then $\pi_{m,n} = f \circ \pi_{xm,yn}$. Hence the image of every line on $\T{xm}{yn}$
	under the~mapping $f$ is a line on $\T{m}{n}$.
	Consequently, a set of points on $\T{m}{n}$ such that no three are collinear can be also used on $\T{xm}{yn}$. Indeed,
	if such points were collinear on $\T{xm}{yn}$ they would be collinear also on $\T{m}{n}$ since the image of every line
	on $\T{xm}{yn}$ is a~line on $\T{m}{n}$.  This leads to the following result.	
		
	\begin{lem} [Lemma 4.1(1) in \cite{Misiak}]  \label{thm_geq}
		Let $m,n,x,y$ be positive integers. Then $\tau_{xm,yn} \geq \tau_{m,n}$.
	\end{lem}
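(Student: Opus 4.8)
The plan is to transfer an optimal no-three-in-line configuration from the smaller torus $\T{m}{n}$ \emph{upwards} to $\T{xm}{yn}$ along a section of the coordinatewise reduction map $f$ discussed above. The only structural input I would use is the identity $\pi_{m,n} = f\circ\pi_{xm,yn}$ --- valid because $m\mid xm$ and $n\mid yn$, which also makes $f$ well defined --- together with its immediate consequence that $f$ carries every line of $\T{xm}{yn}$ to a line of $\T{m}{n}$. First I would fix a set $S\subseteq\T{m}{n}$ with $|S| = \tau_{m,n}$ and no three points collinear, and a section $s\colon\T{m}{n}\to\T{xm}{yn}$ of $f$; the natural choice is $s((a_1,a_2)) = (a_1,a_2)$, where we regard $\{0,\ldots,m-1\}\subseteq\{0,\ldots,xm-1\}$ and $\{0,\ldots,n-1\}\subseteq\{0,\ldots,yn-1\}$, so that $f\circ s = \mathrm{id}$ and in particular $s$ is injective. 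Then $S' := s(S)$ has exactly $\tau_{m,n}$ points.

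Next I would verify that $S'$ has no three collinear points on $\T{xm}{yn}$. Assuming for contradiction that three distinct points $P_1',P_2',P_3'$ of $S'$ all lie on a common line $\pi_{xm,yn}(\ell)$ with $\ell\subseteq\Z^2$ a line, I would apply $f$: the image $f(\pi_{xm,yn}(\ell)) = \pi_{m,n}(\ell)$ is a line of $\T{m}{n}$, and it contains $f(P_1'),f(P_2'),f(P_3')$. Since $f\circ s = \mathrm{id}$, these three points are exactly the three (distinct) preimages of the $P_i'$ under $s$, hence three distinct points of $S$; so $S$ would contain three collinear points, contradicting its choice. This shows $S'$ is a valid configuration on $\T{xm}{yn}$, whence $\tau_{xm,yn}\geq|S'| = \tau_{m,n}$.

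I do not expect a genuinely hard step: the content is entirely in the identity $\pi_{m,n} = f\circ\pi_{xm,yn}$, already recorded above, plus the divisibilities $m\mid xm$ and $n\mid yn$. The one point that deserves care is the \emph{direction} of the transfer: pulling $S$ back along $f$ does not work, since the preimage $f^{-1}(S)$ is far too large and certainly contains collinear triples, so one must instead push $S$ forward along the section $s$. It is precisely the injectivity of $s$ --- equivalently, that $f$ restricts to a bijection $S'\to S$ --- that converts a hypothetical collinear triple on $\T{xm}{yn}$ into a genuine collinear triple on $\T{m}{n}$.
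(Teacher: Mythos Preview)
Your proof is correct and follows essentially the same argument the paper sketches in the paragraph preceding the lemma: use the identity $\pi_{m,n} = f\circ\pi_{xm,yn}$ to see that $f$ carries lines to lines, then lift an optimal configuration from $\T{m}{n}$ to $\T{xm}{yn}$ via the obvious inclusion and observe that any collinear triple upstairs would descend to a collinear triple downstairs. Your explicit discussion of the section $s$ and its injectivity makes the argument a bit more formal, but the content is identical.
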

	
	Theorem~\ref{Thm_prev} and Lemma~\ref{thm_geq} are our main tools for proving Theorem~\ref{general_period} and Theorem~\ref{prime_period}.
	
	\paragraph{Small values of the sequence $\sigma_z(n)$.} 
	For sake of example, we present here few initial values of $\tau_{z,n}$, our
	table is analogous to the table in \cite{Fowler}. See Table~\ref{table:initial_values}. 
	\begin{table}
	\centering
	\begin{tabular}{l || c | c | c | c | c | c | c | c | c | c | c | c | c | c | c | c | c | c | c | c}
		$z$ & 1 & 2 & 3 & 4 & 5 & 6 & 7 & 8 & 9 & 10 & 11 & 12 & 13 & 14 & 15 & 16 & 17 & 18 & 19 & 20 \ldots\\
		\hline \hline
		2  & 2 & 4 & 2 & 4 & 2 & 4 & 2 & 4 & 2 & 4 & 2 & 4 & 2 & 4 & 2 & 4 & 2 & 4 & 2 & 4 \ldots\\ 
		3  & 2 & 2 & 4 & 2 & 2 & 4 & 2 & 2 & 6 & 2 & 2 & 4 & 2 & 2 & 4 & 2 & 2 & 6 & 2 & 2 \ldots\\
		4  & 2 & 4 & 2 & 6 & 2 & 4 & 2 & 8 & 2 & 4 & 2 & 6 & 2 & 4 & 2 & 8 & 2 & 4 & 2 & 6 \ldots\\
		5  & 2 & 2 & 2 & 2 & 6 & 2 & 2 & 2 & 2 & 6 & 2 & 2 & 2 & 2 & 6 & 2 & 2 & 2 & 2 & 6 \ldots\\
		6  & 2 & 4 & 4 & 4 & 2 & 8 & 2 & 4 & 6 & 4 & 2 & 8 & 2 & 4 & 4 & 4 & 2 & 10 & 2 & 4 \ldots
	\end{tabular}	
	\caption{Initial values of $\tau_{z,n}$.} \label{table:initial_values} 
	\end{table}
	The values $\sigma_p(n)$ for a prime number~$p$ can be fully determined
	from Theorem 1.2 in \cite{Misiak}. In general, it follows from their
	result that the~sequence $\sigma_p(n)$ is periodic with the~period $p^2$ for a
	prime number~$p$.	
	
	Considering $\sigma_4(n)$, we can determine the values $\sigma_4(n)$ again by Theorem~1.2 in \cite{Misiak} for $n \in	\{1,2,3,5,6,7\}$.
	We also have a computer assisted proof that $\sigma_4(4) = 6$.\footnote{\cite{Fowler} also claim this value without a detailed proof;
	with a moderate effort it is also possible to get a computer-free proof.} 
	In particular, a configuration of points showing $\sigma_4(4) \geq 6$ is shown in Figure~\ref{Figure_44_48}.
	We can also easily deduce that $\sigma_4(8) = 8$: due to Theorem~\ref{general_upper}, it
	is sufficient to show that $\sigma_4(8) \geq 8$ which follows from the example in Figure~\ref{Figure_44_48}. Given that $\sigma_4(n) = 8$,
	Theorem~\ref{prime_period} implies that $\sigma_4(n)$ is periodic with the~period 8. 
	\begin{figure}
		\centering
		\includegraphics[scale=1]{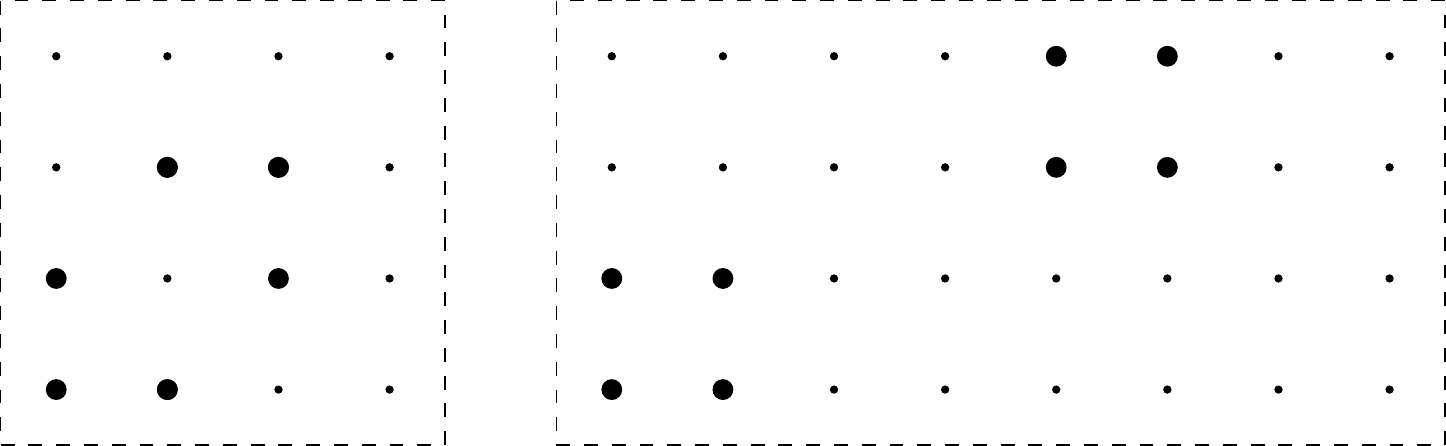}
		\caption{Example of six points on $\T{4}{4}$ (left) and eight points on $\T{4}{8}$ (right) without three points on a~line.}
		\label{Figure_44_48}		
	\end{figure}
		
	\begin{table}
	\centering
	\begin{tabular}{|r || c | c | c |}
		\hline
		$\gcd(n, 2^2 \cdot 3^3)$ & $2^0$ & $2^1$ & $2^2$\\
		\hline \hline
		$3^0$  & 2 & 4 & 4\\
		$3^1$  & 4 & 8 & 8\\
		$3^2$  & 6 & 10 & 12\\
		$3^3$  & 6 & 12 & 12\\
		\hline
	\end{tabular}	
	\caption{Values of $\sigma_6(n)$ according to $\gcd(n, 2^2 \cdot 3^3) = 2^i \cdot 3^j$. } \label{table:sequence_for_sigma_6_GCD} 
	\end{table}	
		
	Regarding the values presented in the table (Table~\ref{table:initial_values}), we can determine exactly $\sigma_6(n)$ for $n \not \equiv 0 \pmod 6$ due to
	Misiak et al. (see \cite{Misiak}). For other values we rely on the~computer proof. It also gives that $\sigma_6(36)=\sigma_6(54)=12$, which is the maximum
	of $\sigma_6(n)$ by Theorem~\ref{general_upper}, and $\sigma_6(24) = 8$.
	
	Unfortunately, we are not able to determine the least period for $\sigma_6(n)$. However, we conjecture $\sigma_6(2^k\cdot 3) = 8$
	for all $k \in \N$.\footnote{We are able to prove it is true for $k \leq 5$ using computer.}
	If it is true then we can determine all values of $\sigma_6(n)$ and the least period is $2^2\cdot 3^3=108$; 
	See Table~\ref{table:sequence_for_sigma_6_GCD}.
	Indeed, by Theorem~\ref{Thm_prev}, Lemma~\ref{thm_geq} and by known initial values we get $\sigma_6(n)=2$ whenever $\gcd(n, 2^2\cdot 3^3) = 1$,
	$\sigma_6(n) = 4$ whenever $\gcd(n, 2^2\cdot 3^3) = 3^1, 2^1,2^2$ or $2^3$,	$\sigma_6(n) = 6$ whenever $\gcd(n, 2^2\cdot 3^3) = 3^2$ or	$3^3$,
	$\sigma_6(n) = 10$ whenever $\gcd(n, 2^2\cdot 3^3) = 2\cdot3^2$, $\sigma_6(n) = 12$	whenever $\gcd(n, 2^2\cdot 3^3) = 2^2\cdot 3^2, 2\cdot3^3$ 
	or $2^2\cdot 3^3$. Finally, $\sigma_6(n) = 8$ whenever $\gcd(n, 2^2\cdot 3^3) = 2\cdot 3, 2^2\cdot 3$ by our conjecture.

\section{Proof of Theorem~\ref{Thm_prev}}
	In this section, we prove Theorem~\ref{Thm_prev}. For this purpose we need the following two well-known theorems.
	
	\begin{thm}[Chinese Remainder Theorem, see \cite{Howard}]  \label{thm_chinese}
		Let $m,n$ be positive integers. Then two simultaneous congruences
		\begin{align*}
			x &\equiv a \pmod{m}, \\
			x &\equiv b \pmod{n} 
		\end{align*}
		are solvable if and only if $a \equiv b \pmod{\gcd(m,n)}$. Moreover, the solution is unique modulo
		$\lcm(m,n)$, where $\lcm$ denotes the least common multiple.
	\end{thm}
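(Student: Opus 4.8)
The plan is to establish the equivalence by proving the two implications separately and then to handle the uniqueness claim.

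For necessity, I would start from a common solution $x$ of both congruences. Since $m \mid x-a$ and $n \mid x-b$, the integer $d := \gcd(m,n)$ divides both $x-a$ and $x-b$, hence it divides their difference $(x-a)-(x-b) = b-a$; this is exactly $a \equiv b \pmod{d}$. So the congruence $a \equiv b \pmod{\gcd(m,n)}$ is forced whenever a solution exists.

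For sufficiency, assume $d \mid b-a$. The idea is to look for a solution of the restricted form $x = a + mt$, which automatically satisfies the first congruence, and to pick $t$ so that the second one holds as well. Substituting into $x \equiv b \pmod n$ reduces the problem to the single linear congruence $mt \equiv b-a \pmod{n}$ in the unknown $t$. Here I would invoke the classical solvability criterion (a direct consequence of B\'ezout's identity) that $mt \equiv c \pmod{n}$ admits a solution precisely when $\gcd(m,n) \mid c$; applying it with $c = b-a$ and using the hypothesis $d \mid b-a$ yields such a $t$, and hence the desired $x$.

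For the ``moreover'' part, I would argue that the solution set is exactly one residue class modulo $\ell := \lcm(m,n)$: if $x, x'$ are two solutions then $m \mid x-x'$ and $n \mid x-x'$, so $\ell \mid x-x'$; conversely, if $x$ is a solution and $x' \equiv x \pmod{\ell}$, then $m \mid x-x'$ and $n \mid x-x'$, so $x'$ solves both congruences as well. I do not expect a genuine obstacle here — the statement is standard and is recorded only for use in the proof of Theorem~\ref{Thm_prev} — the single non-routine ingredient being the solvability criterion for one linear congruence, which is itself elementary.
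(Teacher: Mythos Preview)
Your argument is correct and is the standard textbook proof of the generalized Chinese Remainder Theorem. Note, however, that the paper does not supply its own proof of this statement: it is quoted as a known result with a reference to \cite{Howard}, so there is no in-paper proof to compare against. Your write-up would serve perfectly well as a self-contained justification if one were desired.
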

	
	\begin{thm}[Dirichlet's Theorem, see \cite{Selberg}] \label{thm_dirichlet}
	Let $a,b$ be positive relatively prime integers. Then there are infinitely many primes of the form $a + nb$, where $n$ is a non-negative
	integer.
	\end{thm}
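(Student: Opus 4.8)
The plan is to prove this via the classical analytic argument using Dirichlet characters and their $L$-functions, reducing the statement to an analytic divergence. Write $G = (\mathbb{Z}/b\mathbb{Z})^\times$ and recall that a \emph{Dirichlet character} modulo $b$ is a homomorphism $\chi\colon G \to \mathbb{C}^\times$, extended to all of $\mathbb{Z}$ by setting $\chi(k) = 0$ whenever $\gcd(k,b) > 1$; there are exactly $\phi(b)$ of them, and they form a finite abelian group under pointwise multiplication with identity the principal character $\chi_0$. The key combinatorial input is the orthogonality relation $\frac{1}{\phi(b)} \sum_\chi \overline{\chi(a)}\,\chi(k)$, which equals $1$ when $k \equiv a \pmod b$ and $0$ otherwise; here the hypothesis $\gcd(a,b)=1$ is exactly what guarantees that $\chi(a)$ is a unit, so that this sieve picks out the residue class of $a$.

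The analytic engine is the family of $L$-functions $L(s,\chi) = \sum_{k \geq 1} \chi(k)\, k^{-s}$, which converge absolutely for $\Re(s) > 1$ and, because each $\chi$ is completely multiplicative, factor as Euler products $L(s,\chi) = \prod_p (1 - \chi(p)\, p^{-s})^{-1}$. Taking logarithms and discarding the higher prime-power terms (which stay bounded as $s \to 1^+$) gives $\log L(s,\chi) = \sum_p \chi(p)\, p^{-s} + O(1)$. Combining this with orthogonality yields
\[
	\sum_{p \,\equiv\, a \ (\mathrm{mod}\ b)} \frac{1}{p^s} = \frac{1}{\phi(b)} \sum_\chi \overline{\chi(a)} \log L(s,\chi) + O(1).
\]
The principal character contributes $\log L(s,\chi_0)$, and since $L(s,\chi_0)$ differs from the Riemann zeta function only by the finitely many Euler factors at primes dividing $b$, it inherits the simple pole at $s=1$; hence $\log L(s,\chi_0) \to +\infty$ as $s \to 1^+$. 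If every other character's contribution stays bounded near $s=1$, the left-hand side diverges, which is impossible for a finite set of primes, so the progression contains infinitely many primes (and all but finitely many are automatically of the stated form $a+nb$ with $n\ge 0$).

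The main obstacle — and essentially the entire content of the theorem — is showing that for each non-principal $\chi$ the term $\log L(s,\chi)$ remains bounded as $s \to 1^+$, equivalently that $L(1,\chi) \neq 0$. For non-principal $\chi$ the partial sums of $\chi$ over a period vanish, so by partial summation the series $\sum \chi(k)\, k^{-s}$ converges for $\Re(s) > 0$ and $L(s,\chi)$ extends continuously to $s=1$; the only danger is a zero there. I would split into two cases. For a complex character ($\chi \neq \overline{\chi}$) I would use the elementary lower bound $\prod_\chi L(s,\chi) \geq 1$ for real $s>1$, which follows because this product has a Dirichlet expansion with non-negative coefficients; were $L(1,\chi)=0$, then $\chi$ and $\overline{\chi}$ would be two distinct non-principal characters both vanishing at $s=1$, so the product would acquire a zero of order at least two there, over-compensating the simple pole from $\chi_0$ and forcing the product to $0$ — a contradiction. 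The genuinely hard case is a real non-principal character ($\chi = \overline{\chi}$, taking values in $\{0,\pm 1\}$), where only a single potential zero must be excluded and cannot be ruled out by the counting argument above. Here I would apply Landau's theorem on Dirichlet series with non-negative coefficients to $F(s) = \zeta(s)\,L(s,\chi)$: its Euler factors are $(1-p^{-s})^{-2}$, $(1-p^{-2s})^{-1}$, or $(1-p^{-s})^{-1}$ according as $\chi(p)$ is $1$, $-1$, or $0$, hence all coefficients of $F$ are non-negative and dominate $\sum_m m^{-2s}$, which diverges at $s=\tfrac12$. If $L(1,\chi)=0$ cancelled the pole of $\zeta$, then $F$ would be analytic for $\Re(s)>0$, and Landau's theorem would force convergence there, contradicting the divergence at $s=\tfrac12$. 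This real case is where the difficulty of Dirichlet's theorem truly resides.

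Alternatively, one may bypass $L$-functions and follow the elementary route of Selberg and Shapiro, as in the cited reference, deriving the result from real-variable estimates on $\sum_{p \leq x} (\log p)/p$ restricted to the progression; the non-vanishing obstruction then reappears as the need to control a character-weighted sum, but the argument stays within elementary analysis.
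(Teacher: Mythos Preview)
The paper does not actually prove Dirichlet's Theorem; it is stated there as Theorem~\ref{thm_dirichlet} with a citation to Selberg and is used purely as a black box inside the proof of Lemma~\ref{lemma_1} (to produce a prime $p_u$ in the progression $\frac{u}{g_{u,m}} + i\frac{m}{g_{u,m}}$ that is larger than $mnxy$, and similarly for $p_v$). So there is no ``paper's own proof'' to compare against.

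Your sketch is a correct outline of the classical analytic proof: orthogonality of characters on $(\mathbb{Z}/b\mathbb{Z})^\times$ to isolate the progression, the Euler product and $\log L(s,\chi)$ expansion, the divergence coming from $\chi_0$, and the reduction of everything to $L(1,\chi)\neq 0$ for non-principal $\chi$. Your treatment of the complex case via the positivity of $\prod_\chi L(s,\chi)$ and of the real case via Landau's theorem applied to $\zeta(s)L(s,\chi)$ is standard and accurate (in particular your identification of the Euler factors and the domination $a_{m^2}\ge 1$, giving divergence at $s=\tfrac12$, is correct). The alternative you mention at the end---Selberg's elementary approach---is in fact what the paper's citation points to, though again the paper makes no attempt to reproduce it. For the purposes of this paper, any of these routes (or simply invoking the theorem as known) suffices, since only the existence of a single sufficiently large prime in a given coprime progression is ever used.
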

	
	We again use the idea of Misiak et al.

	\begin{lem}[essentially Lemma 4.1(2) in \cite{Misiak}]\label{Misiak_1}
		Let $\T{xm}{yn}$, $\T{m}{n}$ be tori, where $m,n,x,y \in \N$ and $m,n$ are not both 1. Let
		$f\colon \T{xm}{yn} \to \T{m}{n}$ be a~mapping defined by $f((a_1,a_2)) = (a_1 \mod{m}, a_2 \mod{n})$.
		If the preimage of every line on $\T{m}{n}$ is a line on $\T{xm}{yn}$, then $\tau_{xm,yn} = \tau_{m,n}$. \label{Misiak_1_2}
		\begin{proof}
		
			The Lemma follows from Lemma~\ref{thm_geq} if $\tau_{xm,yn} = 2$. Suppose $\tau_{xm,yn} > 2$.		
			Let $M \subset \T{xm}{yn}$ be a~maximal set such that no three points from $M$ are collinear.
			We show that the image $f(M)$ on $\T{m}{n}$ also satisfies
			the no-three-in-line condition.
			
			First of all, note that $|M| = |f(M)|$. Indeed, if $f(A) = f(B)$ for two distinct $A, B \in M$, then
			$A, B$ and $C$ would be collinear for some point $C \in M \setminus \{A,B\}$.
			
			Now, let us consider three distinct points from $f(M)$. If they were collinear, their preimages would be also collinear
			since the preimage of every line on $\T{m}{n}$ is a line on $\T{xm}{yn}$. 
			Consequently, we have $\tau_{xm,yn} \leq \tau_{m,n}$ and by Lemma~\ref{thm_geq} we have $\tau_{xm,yn} \geq \tau_{m,n}$.
		\end{proof}
	\end{lem}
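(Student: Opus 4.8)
The plan is to prove the two inequalities separately. Lemma~\ref{thm_geq} already yields $\tau_{xm,yn} \ge \tau_{m,n}$ for free, so the entire content of the statement is the reverse inequality $\tau_{xm,yn} \le \tau_{m,n}$. To obtain it, I would fix a maximum no-three-in-line set $M \subset \T{xm}{yn}$, so that $|M| = \tau_{xm,yn}$, and show that its image $f(M) \subset \T{m}{n}$ is again a no-three-in-line set with $|f(M)| = |M|$. This gives $\tau_{m,n} \ge |f(M)| = \tau_{xm,yn}$ and closes the argument. Everything thus reduces to two claims: that $f$ restricted to $M$ is injective, and that $f(M)$ contains no three collinear points.

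First I would dispose of the degenerate case $\tau_{xm,yn} = 2$. Here Lemma~\ref{thm_geq} gives $2 \ge \tau_{m,n}$, while $\tau_{m,n} \ge 2$ because $\T{m}{n}$ has at least two points (this is exactly where the assumption that $m, n$ are not both $1$ enters), so $\tau_{m,n} = 2$ and we are done. Consequently I may assume $|M| = \tau_{xm,yn} > 2$, which is precisely the hypothesis the injectivity step requires.

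For injectivity, suppose toward a contradiction that two distinct points $A, B \in M$ satisfy $f(A) = f(B)$. Since $|M| > 2$, I can fix a third point $C \in M \setminus \{A, B\}$. The key sub-observation is that any two (not necessarily distinct) points of $\T{m}{n}$ lie on a common line, because the difference of their lifts to $\Z^2$ determines a line through both; hence I can choose a line $\ell$ on $\T{m}{n}$ passing through $f(A) = f(B)$ as well as through $f(C)$. By the hypothesis of the lemma its preimage $f^{-1}(\ell)$ is a line on $\T{xm}{yn}$, and it contains $A$, $B$, and $C$ since each of them maps into $\ell$. Thus $A, B, C$ are three distinct collinear points of $M$, contradicting the choice of $M$; therefore $f$ is injective on $M$ and $|f(M)| = |M|$.

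The no-three-in-line property of $f(M)$ then follows by running the same mechanism in reverse: if three distinct points $P, Q, R \in f(M)$ were collinear on some line $\ell'$ of $\T{m}{n}$, their preimages in $M$---distinct by the injectivity just established---would all lie on the line $f^{-1}(\ell')$, contradicting that $M$ is no-three-in-line. Together with $|f(M)| = |M|$ this gives $\tau_{m,n} \ge \tau_{xm,yn}$, and with Lemma~\ref{thm_geq} the claimed equality. I expect the injectivity step to be the crux: it is the only place that genuinely combines the line-preimage hypothesis with maximality, and it relies on the small fact that every pair of points of $\T{m}{n}$ is covered by a line, which is what guarantees a usable $\ell$ can always be found.
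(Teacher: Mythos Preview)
Your proof is correct and follows essentially the same route as the paper's: handle $\tau_{xm,yn}=2$ via Lemma~\ref{thm_geq}, then for $|M|>2$ show $f|_M$ is injective and $f(M)$ is no-three-in-line by pulling back lines. You are in fact slightly more explicit than the paper in the injectivity step, spelling out that one takes a line $\ell$ through $f(A)=f(B)$ and $f(C)$ and uses $f^{-1}(\ell)$, whereas the paper simply asserts that $A,B,C$ would be collinear; this is the same idea, just unpacked.
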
		
	
    Now, we generalize Lemmas 4.3 and 4.4 from \cite{Misiak}.	
	\begin{lem}\label{lemma_1}
		Let $x,y,m,n \in \N$ such that $\gcd(x,y) = \gcd(m,y) = \gcd(n,x) = 1$.
		Let $\T{xm}{yn}$, $\T{m}{n}$ be tori and let $f\colon \T{xm}{yn} \to \T{m}{n}$ be 
		a mapping defined by $f((a_1,a_2)) = (a_1 \mod{m}, a_2 \mod{n})$.	Then the preimage of every line on $\T{m}{n}$ is a line on $\T{xm}{yn}$.
		\begin{proof}
		First, note that it is sufficient to consider only lines which contain the origin.
		The other lines are just translations of such lines.
		
		Let $\ell$ be a line on $\T{m}{n}$ which contains the origin. We can express it as
		$\ell=\{\pi_{m,n}\left(k\left(u,v\right)\right); k \in \Z\}$ where $\gcd(u, v) = 1$.
		We have to find $u^\ast, v^\ast  \in \N_0$ such that $\gcd(u^\ast, v^\ast) = 1$ defining the line 
		$\ell^\ast = \{\pi_{xm,yn}\left(k(u^\ast, v^\ast)\right); k \in \Z\}$ on $\T{xm}{yn}$
		which it is the preimage of $\ell$.
		In other words, for every point $(a_1, a_2$) from $\ell^\ast$ its image $f((a_1, a_2))$ belongs to $\ell$ and for every point $(b_1,b_2)$ from $f^{-1}(\ell)$
		the following equations have solution.
		\begin{align}
			u^\ast k &\equiv b_1 \pmod{xm}, \label{eq_b1} \\
			v^\ast k &\equiv b_2 \pmod{yn}. \label{eq_b2}
		\end{align}		
			
		Let us denote $g_{u,m} := \gcd(u, m)$ and $g_{v,n} := \gcd(v, n)$.
		Since $\frac{u}{g_{u,m}}$ and $\frac{m}{g_{u,m}}$ are relatively prime, by the Dirichlet's Theorem (Theorem~\ref{thm_dirichlet})
		there is $i$ defining a prime $p_u = \frac{u}{g_{u,m}} + i \frac{m}{g_{u,m}}$ which is greater than $mnxy$. 
		At the same way, we can get $j$ defining another prime $p_v = \frac{v}{g_{v,n}} + j \frac{n}{g_{v,n}}$ greater than $mnxy$
		such that $p_v \not=  p_u$. This does not work if either $u$ or $v$ is zero (one of them is
		always non-zero). In such case, we set $i := 1$ or $j := 1$, respectively, and hence $p_u = 1$ or $p_v = 1$, respectively. 		
		Now, we set
		\begin{align}
			u^\ast &:= g_{u,m}p_u = g_{u,m}\left( \frac{u}{g_{u,m}} + i \frac{m}{g_{u,m}}\right) = u + im, \label{p_gamma} \\
			v^\ast &:= g_{v,n}p_v = g_{v,n}\left( \frac{v}{g_{v,n}} + j \frac{n}{g_{v,n}}\right) = v + jn. \label{p_delta}
		\end{align}
		
		Note that $\gcd(g_{u,m}, g_{v,n}) = 1$ and, therefore, $\gcd(u^\ast, v^\ast) = 1$. Indeed, $g_{u,m}$ divides $u$, $g_{v,n}$ divides $v$
		and $\gcd(u, v) = 1$.		
		
		Since $u^\ast = u +im$ and $v^\ast + jn$, the image $f((a_1, a_2))$ belongs to $\ell$ for every $(a_1,a_2)$ from $\ell^\ast$.
		It remains to check whether equalities (\ref{eq_b1}), (\ref{eq_b2}) have solution for every point $(b_1,b_2)$ from $f^{-1}(\ell)$.
		
		Since $(b_1,b_2) \in f^{-1}(\ell)$ there is $t \in \N_0$ such that
		\begin{align*}
			b_1 &\equiv u t \pmod{m}, \\
			b_2 &\equiv v t \pmod{n}.
		\end{align*}
		By the definition of $u^\ast$ and $v^\ast$ (see (\ref{p_gamma}), (\ref{p_delta})) we have
		\begin{align*}
				b_{1} &\equiv u^\ast t \pmod{m},\\
				b_{2} &\equiv v^\ast t \pmod{n},\\
				v^\ast b_{1} &\equiv v^\ast u^\ast t \pmod{v^\ast m},\\ 
				u^\ast b_{2} &\equiv v^\ast u^\ast t \pmod{u^\ast n}.
		\end{align*}
		Therefore, $v^\ast b_{1} \equiv u^\ast b_{2} \pmod{\text{GCD}(v^\ast m, u^\ast n)}$. 
		
		Again, by the definition of $u^\ast$ and $v^\ast$ (see (\ref{p_gamma}), (\ref{p_delta})) we have 
		
		\begin{align*}
		\gcd\left(v^\ast m, u^\ast n \right) &= \gcd\left(g_{v,n}p_{v}m, g_{u,m}p_{u}n\right) 
			= \gcd\left(g_{v,n}p_{v}g_{u,m}\frac{m}{g_{u,m}}, g_{u,m}p_{u}g_{v,n}\frac{n}{g_{v,n}}\right) \\ 
			&=g_{u,m}g_{v,n}\cdot \gcd\left(\frac{m}{g_{u,m}}, \frac{n}{g_{v,n}}\right) = 
			 g_{u,m}g_{v,n} \cdot \gcd\left(x\frac{m}{g_{u,m}}, y\frac{n}{g_{v,n}}\right)  \\
			&= \gcd\left(g_{v,n}p_{v}xg_{u,m}\frac{m}{g_{u,m}}, g_{u,m}p_{u}yg_{v,n}\frac{n}{g_{v,n}}\right) =
			\gcd\left(v^\ast x m, u^\ast y n\right).
		\end{align*}
        Hence $v^\ast b_1 \equiv u^\ast b_2 \pmod{\gcd(v^\ast xm, u^\ast yn)}$.		
		
		Now, let us consider the following equations.
		\begin{align*}
			s &\equiv v^\ast b_1 \pmod{v^\ast xm},   \\
			s &\equiv u^\ast b_2 \pmod{u^\ast yn}.
		\end{align*}		
		Since  $v^\ast b_1 \equiv u^\ast b_2 \pmod{\gcd(v^\ast xm, u^\ast yn)}$, these equations have solution
		by the Chinese Remainder Theorem (Theorem~\ref{thm_chinese}). Since $\gcd(u^\ast, v^\ast) = 1$, $s = u^\ast v^\ast k$ for some $k \in \Z$. 
		Therefore,
		\begin{align*}
			u^\ast k &\equiv  b_1 \pmod{xm}, \\
			v^\ast k &\equiv b_2 \pmod{yn}.
		\end{align*}
		Consequently, equations (\ref{eq_b1}), (\ref{eq_b2}) have solution and the point $(b_1, b_2) \in f^{-1}(\ell)$ lies on $\ell^\ast$. We are done.	
		\end{proof}
	\end{lem}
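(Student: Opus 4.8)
The plan is to prove the statement by explicitly producing, for each line $\ell$ on $\T{m}{n}$, a line $\ell^\ast$ on $\T{xm}{yn}$ with $\ell^\ast = f^{-1}(\ell)$. First I would reduce to lines through the origin: a general line is a translate of such a line, and $f$ is a homomorphism, so $f^{-1}$ commutes with translation; thus it suffices to treat $\ell = \{\pi_{m,n}(k(u,v)) : k \in \Z\}$ with $\gcd(u,v)=1$. The goal becomes to choose a direction $(u^\ast, v^\ast)$ with $\gcd(u^\ast, v^\ast)=1$, $u^\ast \equiv u \pmod m$ and $v^\ast \equiv v \pmod n$, so that $\ell^\ast = \{\pi_{xm,yn}(k(u^\ast, v^\ast)) : k \in \Z\}$ is exactly $f^{-1}(\ell)$. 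Concretely, for every point of $\ell^\ast$ its image must lie in $\ell$, and for every $(b_1,b_2) \in f^{-1}(\ell)$ the equations $u^\ast k \equiv b_1 \pmod{xm}$, $v^\ast k \equiv b_2 \pmod{yn}$ must be solvable.

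For the construction I would set $g_{u,m} := \gcd(u,m)$, $g_{v,n} := \gcd(v,n)$. Since $u/g_{u,m}$ and $m/g_{u,m}$ are coprime, Dirichlet's Theorem (Theorem~\ref{thm_dirichlet}) supplies a prime $p_u \equiv u/g_{u,m} \pmod{m/g_{u,m}}$ larger than $mnxy$, and likewise a prime $p_v \equiv v/g_{v,n} \pmod{n/g_{v,n}}$ larger than $mnxy$ with $p_v \neq p_u$ (the degenerate cases $u=0$ or $v=0$, where one coordinate vanishes, are handled by taking that prime to be $1$). Setting $u^\ast := g_{u,m} p_u = u + im$ and $v^\ast := g_{v,n} p_v = v + jn$ gives the required congruences, so the forward inclusion $\ell^\ast \subseteq f^{-1}(\ell)$ is immediate. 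Coprimality $\gcd(u^\ast, v^\ast)=1$ follows because $\gcd(g_{u,m}, g_{v,n})=1$ (as $g_{u,m}\mid u$, $g_{v,n}\mid v$ and $\gcd(u,v)=1$) and because $p_u, p_v$ are distinct primes exceeding every other factor in play.

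The substantive direction is the reverse inclusion $f^{-1}(\ell) \subseteq \ell^\ast$. Given $(b_1, b_2) \in f^{-1}(\ell)$ there is $t$ with $b_1 \equiv u^\ast t \pmod m$ and $b_2 \equiv v^\ast t \pmod n$; I would eliminate $t$ by passing to $v^\ast b_1 \equiv u^\ast v^\ast t \pmod{v^\ast m}$ and $u^\ast b_2 \equiv u^\ast v^\ast t \pmod{u^\ast n}$, which forces the compatibility congruence $v^\ast b_1 \equiv u^\ast b_2 \pmod{\gcd(v^\ast m, u^\ast n)}$. I then solve the target system via the Chinese Remainder Theorem (Theorem~\ref{thm_chinese}): the equations $s \equiv v^\ast b_1 \pmod{v^\ast xm}$ and $s \equiv u^\ast b_2 \pmod{u^\ast yn}$ are solvable provided $v^\ast b_1 \equiv u^\ast b_2 \pmod{\gcd(v^\ast xm, u^\ast yn)}$, and any solution $s$ is automatically divisible by both $u^\ast$ and $v^\ast$, hence (using $\gcd(u^\ast,v^\ast)=1$) of the form $s = u^\ast v^\ast k$; cancelling $v^\ast$ and $u^\ast$ respectively yields the desired $k$ with $(b_1,b_2) = \pi_{xm,yn}(k(u^\ast,v^\ast)) \in \ell^\ast$.

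The crux, and what I expect to be the main obstacle, is therefore the identity $\gcd(v^\ast m, u^\ast n) = \gcd(v^\ast xm, u^\ast yn)$, which bridges the compatibility condition I have with the one CRT requires. This is exactly where all three hypotheses $\gcd(x,y)=\gcd(m,y)=\gcd(n,x)=1$ enter. I would prove it by first using that $p_u, p_v$ exceed $mnxy$ (so they are coprime to $m/g_{u,m}$, $n/g_{v,n}$, $x$, $y$ and to each other) to reduce both sides to $g_{u,m}g_{v,n}\cdot\gcd(m/g_{u,m},\, n/g_{v,n})$ and $g_{u,m}g_{v,n}\cdot\gcd(x\,m/g_{u,m},\, y\,n/g_{v,n})$, respectively; it then remains to show the two inner gcd's coincide. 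Writing $A := m/g_{u,m}$ and $B := n/g_{v,n}$, one has $\gcd(A,y)=1$ and $\gcd(B,x)=1$ since $A\mid m$, $B\mid n$; combined with $\gcd(x,y)=1$ this gives $\gcd(x, yB)=1$, whence $\gcd(xA, yB) = \gcd(A, yB)$, and then $\gcd(y,A)=1$ gives $\gcd(A,yB) = \gcd(A,B)$. With this identity established the reverse inclusion closes and the proof is complete.
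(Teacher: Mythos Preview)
Your proposal is correct and follows essentially the same approach as the paper's proof: the same reduction to lines through the origin, the same construction of $(u^\ast,v^\ast)$ via Dirichlet's theorem applied to $u/g_{u,m}$ and $v/g_{v,n}$, the same CRT argument for the reverse inclusion, and the same key identity $\gcd(v^\ast m, u^\ast n) = \gcd(v^\ast xm, u^\ast yn)$. Your justification of that identity (via $A=m/g_{u,m}$, $B=n/g_{v,n}$ and the chain $\gcd(xA,yB)=\gcd(A,yB)=\gcd(A,B)$) is in fact more explicit than the paper's, which records the same equality as a string of gcd manipulations without spelling out where each hypothesis enters.
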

	Now, we are able to prove Theorem~\ref{Thm_prev} using Lemmas~\ref{Misiak_1} and \ref{lemma_1}.
	%\begin{repthm}{Thm_prev}
	%	Let $m, n, x, y$ be positive integers such that $m,n$ are not both 1 and $\gcd(x,y) = \gcd(m,y) = \gcd(n,x) = 1$.
	%	Then $\tau_{xm,yn} = \tau_{m,n}$. 
	%\end{repthm}
		\begin{proof}[Proof of Theorem~\ref{Thm_prev}]
			Considering a mapping $f\colon\T{xm}{yn} \to \T{m}{n}$ defined by $f((a_1,a_2)) = (a_1 \mod{m}, a_2 \mod{n})$,
			the preimage of every line on $\T{m}{n}$ is a line on $\T{xm}{yn}$ by Lemma~\ref{lemma_1}. Therefore,
			$\tau_{xm,yn} = \tau_{m,n}$ by Lemma~\ref{Misiak_1}.
		\end{proof}
	
%\newpage	
\section{Periodicity}	
	In this section, we prove that the sequence $\sigma_z$ is periodic for all $z \in \N$ (Theorem~\ref{prime_period} and
	Theorem~\ref{general_period}).
	
	Before we start, we mention two lemmas which are used
	in the proofs of Theorem~\ref{prime_period} and \ref{general_period}.
	They are special cases of Theorem~\ref{Thm_prev} and Lemma~\ref{thm_geq}, respectively.
		
	\begin{lem}\label{Lem_prevseq}
		Let $z, x, m \in \N$ such that $z > 1$ and $\gcd(x,z) = 1$. Then $\sigma_z(m) = \sigma_z(xm)$.
		\begin{proof}
			We may use Theorem~\ref{Thm_prev} for the tori $\T{m}{z}$ and $\T{xm}{z}$. 
		\end{proof}
	\end{lem}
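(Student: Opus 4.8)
The plan is to obtain this as an immediate specialization of Theorem~\ref{Thm_prev}. First I would record the symmetry $\sigma_z(k) = \tau_{z,k} = \tau_{k,z}$: the tori $\T{k}{z}$ and $\T{z}{k}$ differ only by swapping the two coordinates, and this swap carries lines to lines, so it induces a bijection between no-three-in-line configurations of maximum size. Hence it suffices to prove $\tau_{xm,z} = \tau_{m,z}$.

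Next I would apply Theorem~\ref{Thm_prev} to the tori $\T{xm}{z}$ and $\T{m}{z}$, that is, with the substitution $(m,n,x,y) \mapsto (m,z,x,1)$ in the notation of that theorem, which yields $\tau_{xm,z} = \tau_{m,z}$ as soon as the hypotheses are met. The verification is routine: $m$ and $z$ are not both $1$ because $z > 1$; and the three coprimality conditions become $\gcd(x,1) = \gcd(m,1) = \gcd(z,x) = 1$, of which the first two are automatic and the third is precisely the standing assumption $\gcd(x,z) = 1$. Combining this with the symmetry observation gives $\sigma_z(m) = \sigma_z(xm)$.

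There is essentially no real obstacle here; the statement is isolated from Theorem~\ref{Thm_prev} only because it is exactly the form used later, where the first coordinate of the torus is held fixed at $z$ and the second is replaced by a coprime multiple. The only points demanding a little care are the coprimality bookkeeping — it is $\gcd(x,z)$, not $\gcd(x,m)$, that must vanish — and the degenerate case $z = 1$, which is explicitly excluded precisely so that the hypothesis ``$m,n$ not both $1$'' of Theorem~\ref{Thm_prev} is guaranteed.
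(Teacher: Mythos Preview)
Your proof is correct and matches the paper's approach exactly: both apply Theorem~\ref{Thm_prev} with the substitution $(m,n,x,y)\mapsto(m,z,x,1)$, using $z>1$ to ensure the ``not both $1$'' hypothesis. The only difference is that you spell out the symmetry $\tau_{z,k}=\tau_{k,z}$ explicitly, whereas the paper takes it for granted.
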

	
	\begin{lem}\label{Lem_geqseq}
		Let $z,x,m \in \N$. Then $\sigma_z(m) \leq \sigma_z(xm)$.
		\begin{proof}
			We use Lemma~\ref{thm_geq} for the tori $\T{m}{z}$ and $\T{xm}{z}$. 
		\end{proof}
	\end{lem}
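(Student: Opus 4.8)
The plan is to read this off directly from Lemma~\ref{thm_geq}, which asserts $\tau_{xm,yn}\ge\tau_{m,n}$ for all positive integers. First I would unfold the definition $\sigma_z(n)=\tau_{z,n}$, so that the claimed inequality $\sigma_z(m)\le\sigma_z(xm)$ is literally $\tau_{z,m}\le\tau_{z,xm}$. This is now an instance of Lemma~\ref{thm_geq}: apply it with base torus parameters $(z,m)$ and multipliers $1$ on the first coordinate and $x$ on the second, i.e.\ substitute $m\mapsto z$, $n\mapsto m$, $x\mapsto 1$, $y\mapsto x$ into the statement $\tau_{xm,yn}\ge\tau_{m,n}$. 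The left-hand side becomes $\tau_{1\cdot z,\,x\cdot m}=\tau_{z,xm}$ and the right-hand side becomes $\tau_{z,m}$, yielding exactly $\tau_{z,xm}\ge\tau_{z,m}$, which is the assertion.

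Equivalently, and to match the symmetric bookkeeping of $\sigma_z$, one may use $\tau_{a,b}=\tau_{b,a}$ (swapping the two torus coordinates is a bijection carrying lines to lines, hence preserving the maximal no-three-in-line size) to rewrite the goal as $\tau_{m,z}\le\tau_{xm,z}$ and then apply Lemma~\ref{thm_geq} to the pair of tori $\T{m}{z}$ and $\T{xm}{z}$ with the multiplier $y=1$ on the second coordinate. I expect no genuine obstacle here: all the content is already packaged into Lemma~\ref{thm_geq}, whose point is that the reduction map $f$ sends lines of the larger torus to lines of the smaller one, so that any no-three-in-line configuration on $\T{m}{z}$ transports to one of the same size on $\T{xm}{z}$. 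The only thing to get right is the trivial matching of indices and the choice of the unit multiplier.
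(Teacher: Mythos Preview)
Your proposal is correct and is essentially the same as the paper's proof: both reduce the inequality to a direct instance of Lemma~\ref{thm_geq} applied to the tori $\T{m}{z}$ and $\T{xm}{z}$ (equivalently, with a unit multiplier on the $z$-coordinate). The only difference is that you spell out the substitution explicitly, which is fine.
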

	
	First, we prove Theorem~\ref{prime_period}, that is, the sequence $\sigma_p$ is periodic with the period
	$m :=\min\{x; \sigma_{p^a}(x) = 2p^a\}$ for $p$ prime.
	%\begin{repthm}{prime_period}
	%	Let $p$ be a prime, $a \in \N$. Let us denote $m :=\min\{x; \sigma_{p^a}(x) = 2p^a\}$.
	%	Then $m = p^b$ for some $b \geq a$ and the sequence $\sigma_{p^a}$ is
	%	periodic with the period $m$.
	%\end{repthm}
		\begin{proof}[Proof of Theorem~\ref{prime_period}]
			Note that the existence of $m=\min\{x; \sigma_{p^a}(x) = 2p^a\}$ follows from Theorem~\ref{max_p} which is proven in the following section 
			(Section~\ref{sec:proof_thm_max_p}). More precisely, $m \leq p^{(a-1)p+2}$ by Theorem~\ref{max_p}.		
		
			First, we observe that $m = p^b$ for some $b \in \N$. 
			Indeed, if $m$ were $hp^b$ for some  $h > 1$ such that $p$
			does not divide $h$, then $\sigma_{p^a}(p^b) = \sigma_{p^a}(hp^b)$ by Lemma~\ref{Lem_prevseq}. This would contradict the minimality of $m$.
			
			Let us consider an arbitrary $x \in \{1, \ldots, p^b\}$. We show that $\sigma_{p^a}(x) = \sigma_{p^a}(x + \alpha p^b)$ for any $\alpha \in \N$. 
			Since $x \leq p^b$ we can express it as $x = rp^l$, where $0 \leq l \leq b$ and $\gcd(r,p) = 1$.
			By Lemma~\ref{Lem_prevseq} $\sigma_{p^a}(x) = \sigma_{p^a}(p^l)$.
			We consider two cases:
			\begin{enumerate}
				\item $x < p^b$. \newline
					In this case $x + \alpha p^b = p^l(r + \alpha p^{b-l})$. Since $b - l \geq 1$,  we get $\gcd(r + \alpha p^{b-l}, p) = 1$.
					Therefore, $\sigma_{p^a}(p^l(r + \alpha p^{b-l})) = \sigma_{p^a}(p^l) = \sigma_{p^a}(x)$ by Lemma~\ref{Lem_prevseq}.
				\item $x = p^b$. \newline
					In this case $x + \alpha p^b = p^b(1+\alpha)$. Since $\sigma_{p^a}(p^b) = 2p^a = \max \sigma_{p^a}$, 
					Lemma~\ref{Lem_geqseq} implies $\sigma_{p^a}(p^b) = \sigma_{p^a}(hp^b)$ for any $h > 0$ 
					and hence also for $h = (1 + \alpha)$.
			\end{enumerate}
		We proved that $\sigma_{p^a}(x) = \sigma_{p^a}(x + \alpha p^b)$ for any $x \in \{1, \ldots, p^b\}$.
		\end{proof}

	Now, we prove that the sequence $\sigma_{z}(x)$ is periodic for all $z \in N$. For this purpose, we use the following observation and lemma.
	
	\begin{obs}\label{obs:sequence}
		Every infinite sequence $\{C^{(i)}\}_{i \in \N}$ of $n-$tuples of natural numbers $C^{(i)} \in \N^n$ contains an infinite non-decreasing subsequence,
		that is, $\{C^{(t_i)}\}_{i \in \N}$ such that $C^{(t_i)}_j \leq C^{(t_{i+1})}_j$ for all $i,j \in \N$.
		\begin{proof}
			It holds trivially for $n = 1$. For $n > 1$ we may proceed by induction.
		\end{proof}
	\end{obs}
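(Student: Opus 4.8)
The plan is to prove Observation~\ref{obs:sequence} by induction on $n$, the length of the tuples, as the hint suggests. First I would nail down the base case $n=1$: given an infinite sequence of natural numbers $(C^{(i)})_{i\in\N}$ I must extract an infinite non-decreasing subsequence. The clean way is to split into two cases. If the sequence is bounded, then only finitely many distinct values occur, so by the pigeonhole principle some value is attained infinitely often and the corresponding constant (hence non-decreasing) subsequence works. If the sequence is unbounded, I would build the subsequence greedily: set $t_1:=1$, and given $t_i$, observe that there cannot be only finitely many indices $k$ with $C^{(k)}\ge C^{(t_i)}$ — otherwise the sequence would take only finitely many values exceeding $C^{(t_i)}$ and would therefore be bounded — so we may pick $t_{i+1}>t_i$ with $C^{(t_{i+1})}\ge C^{(t_i)}$. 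Either way we obtain the desired non-decreasing subsequence. (Alternatively, one can invoke the classical fact that every real sequence has a monotone subsequence together with the remark that a non-increasing sequence of natural numbers is eventually constant.)

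For the inductive step, suppose the statement holds for $(n-1)$-tuples and let $(C^{(i)})_{i\in\N}$ be a sequence in $\N^n$. I would first apply the base case to the sequence of first coordinates $(C^{(i)}_1)_{i\in\N}$ to pass to an infinite subsequence along which the first coordinate is non-decreasing. Restricting attention to this subsequence and forgetting the first coordinate leaves an infinite sequence in $\N^{n-1}$, to which the induction hypothesis applies and yields a further infinite subsequence along which coordinates $2,\dots,n$ are simultaneously non-decreasing. Along this final subsequence the first coordinate is still non-decreasing, being a subsequence of a non-decreasing sequence, so all $n$ coordinates are non-decreasing — which is exactly the claim.

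I do not expect a genuine obstacle here; the statement is a special case of Dickson's lemma and the induction is short. The only two points needing a little care are (i) the unbounded case of $n=1$, where one must not carelessly assert that some value repeats infinitely often, and (ii) arranging the refinement of subsequences in the inductive step so that the monotonicity already secured for the earlier coordinate survives the further refinement; this is automatic since any subsequence of a non-decreasing sequence is non-decreasing, but it is the spot where a hurried argument could slip.
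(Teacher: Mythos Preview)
Your proof is correct and follows exactly the approach the paper sketches: induction on $n$, with the inductive step refining first along one coordinate and then along the remaining $n-1$ via the hypothesis. You have simply filled in the details the paper omits, including the careful base case and the observation that further refinement preserves monotonicity in the already-handled coordinate.
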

	
	\begin{lem}\label{m_z}
		Let $z \in \N$ and $z = \prod_{i \in I} p_i^{a_i}$ be its prime factorization. There exists $m_z = \prod_{i \in I} p_i^{b}$,
		where $b \geq a_i$ for each $i \in I$ which satisfies the following condition. 
		\begin{align}
			\forall J \subseteq I : \sigma_z\left(\prod_{i \in \overline{J}} p_i^{b} \prod_{i \in J} p_i^{c_i}\right) =
					\sigma_z\left(\prod_{i \in \overline{J}} p_i^{d_i} \prod_{i \in J} p_i^{c_i}\right) \label{cond_mz}
		\end{align}
		for arbitrary $0 \leq c_i < b_i$, $d_i \geq b$ and where $\overline{J} := I \setminus J$.
		\begin{proof}
			First, note that the left expression in (\ref{cond_mz}) is always less than or equal to the right expression
			by Lemma~\ref{Lem_geqseq}.
			The same lemma also implies that if (\ref{cond_mz}) does not hold for some
			$J \subseteq I, (c_i)_{i \in J}, (d_i)_{i \in \overline{J}}$ 
			(that is, the right side of (\ref{cond_mz}) is strictly greater than the left side) then it does not hold also for
			$J \subseteq I, (c_i)_{i \in J}, (\delta_i)_{i \in \overline{J}}$ where all $\delta_i = \max\{d_j; j \in \overline{J}\}$.
			
			For a contradiction, let us assume there is no $m_z$ which satisfies condition~(\ref{cond_mz}). This implies there is an infinite sequence 
			$\mathfrak{C}$ of \emph{counterexamples}
			\begin{align*}
				 \mathcal{C}^{(b)} = \left(J^{(b)}, C^{(b)}, \delta^{(b)}\right), \text{\;where\;} J^{(b)} \subseteq\ I,
				 C^{(b)} = (c^{(b)}_i)_{i \in J^{(b)}}, \delta^{(b)} \geq b,
			\end{align*}	
			for all $b \in \N$ such that 
			\begin{align}
				\sigma_z\left(\prod_{i \in \overline{J^{(b)}}} p_i^{b} \prod_{i \in J^{(b)}} p_i^{c^{(b)}_i}\right) <
					\sigma_z\left(\prod_{i \in \overline{J^{(b)}}} p_i^{\delta^{(b)}} \prod_{i \in J^{(b)}} p_i^{c^{(b)}_i}\right).
					\label{align:sequence_counter}
			\end{align}
			Since $I$ is finite, there exists an infinite subsequence $\mathfrak{C}^\prime$ of counterexamples whose index sets $J^{(b)} \subseteq I$ 
			are equal to some $J \subseteq I$. Since $\sigma_z$ attains only finitely many values, there exists an infinite subsequence
			$\mathfrak{C}^{\prime\prime}$ of counterexamples from $\mathfrak{C}^\prime$
			such that the right sides of (\ref{align:sequence_counter}) for them are equal to some $S \in \N$. Finally, Observation~\ref{obs:sequence} 
			implies there exists an infinite subsequence $\mathfrak{C}^{\prime\prime\prime} = \{\mathcal{C}^{(t_b)}\}_{b \in \N}$ of counterexamples 
			from $\mathfrak{C^{\prime\prime}}$ such that $\{C^{(t_b)}\}_{b \in \N}$ is a non-decreasing sequence.
			
			Now, let $t_b \geq \delta^{(t_1)}$. Then we get
			\begin{align*}
				\sigma_z\left(\prod_{i \in \overline{J}} p_i^{t_b} \prod_{i \in J} p_i^{c^{(t_b)}_i}\right) <
					\sigma_z\left(\prod_{i \in \overline{J}} p_i^{\delta^{(t_b)}} \prod_{i \in J} p_i^{c^{(t_b)}_i}\right) = S &= 
					\sigma_z\left(\prod_{i \in \overline{J}} p_i^{\delta^{(t_1)}} \prod_{i \in J} p_i^{c^{(t_1)}_i}\right) \leq
					\sigma_z\left(\prod_{i \in \overline{J}} p_i^{t_b} \prod_{i \in J} p_i^{c^{(t_b)}_i}\right).
			\end{align*}
			A contradiction. Note that the first inequality holds by~(\ref{align:sequence_counter}). The second inequality holds by Lemma~\ref{Lem_geqseq}
			since $t_b \geq \delta^{(t_1)}$ and $c_i^{(t_1)} \leq c_i^{(t_k)}$ for all $i \in J$.
		\end{proof}
	\end{lem}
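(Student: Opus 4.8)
The plan is to exploit two structural features of $\sigma_z$ and then run a compactness argument over the prime exponents. By Theorem~\ref{general_upper} we have $\sigma_z(n) \le 2z$, so $\sigma_z$ attains only finitely many values; and by Lemma~\ref{Lem_geqseq} it is monotone under divisibility, while by Lemma~\ref{Lem_prevseq} its value depends only on the exponents of the primes $p_i \mid z$. Thus, after identifying each argument with its exponent vector $(e_i)_{i \in I}$, the quantity $\sigma_z$ becomes a bounded, coordinate-wise non-decreasing function of $(e_i)$. Condition~(\ref{cond_mz}) is then exactly the assertion that there is a single threshold $b$ with the property that, once the exponents outside a subset $J$ have reached $b$, raising them further (while the $J$-exponents stay below $b$) never changes the value, uniformly in $J$ and in the frozen exponents $c_i$.

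Before searching for $b$ I would record two cheap reductions. The inequality ``$\le$'' in~(\ref{cond_mz}) is automatic from Lemma~\ref{Lem_geqseq}, so only ``$\ge$'' can fail. And if~(\ref{cond_mz}) fails for some $J$, some frozen tuple $(c_i)_{i \in J}$ and some large exponents $(d_i)_{i \in \overline J}$, then it already fails after replacing every large exponent by their common maximum $\delta := \max_i d_i \ge b$, again by monotonicity. Hence it suffices to rule out counterexamples of the normalized form in~(\ref{align:sequence_counter}), carrying a single large exponent $\delta \ge b$.

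I would then argue by contradiction: if no admissible $b$ exists, then for every $b \in \N$ there is a normalized counterexample $\mathcal C^{(b)} = (J^{(b)}, (c_i^{(b)}), \delta^{(b)})$ with $\delta^{(b)} \ge b$ satisfying~(\ref{align:sequence_counter}). The core of the proof is to compress this infinite family into one coherent comparison, and I would do this by three successive refinements. Since $I$ has finitely many subsets, pass to a subsequence with $J^{(b)} \equiv J$ constant; since $\sigma_z$ has finitely many values, refine so that the common right-hand value equals a fixed $S$; and finally, by Observation~\ref{obs:sequence} applied to the frozen tuples $(c_i^{(b)})_{i \in J}$, refine to a subsequence $\{\mathcal C^{(t_b)}\}$ along which these tuples are coordinate-wise non-decreasing. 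Now choose $t_b$ with $t_b \ge \delta^{(t_1)}$. The argument on the right of~(\ref{align:sequence_counter}) for index $t_1$ divides the argument on the left for index $t_b$, because $t_b \ge \delta^{(t_1)}$ and $c_i^{(t_1)} \le c_i^{(t_b)}$ for all $i \in J$; Lemma~\ref{Lem_geqseq} then gives $S \le \sigma_z(\text{left argument for } t_b)$, while~(\ref{align:sequence_counter}) for $t_b$ gives $\sigma_z(\text{left argument for } t_b) < S$, a contradiction. This produces a working threshold $b$, and since such a threshold persists for every larger one (raising $b$ only caps more coordinates and cannot create a new counterexample), I may take $b \ge \max_i a_i$ and set $m_z = \prod_{i \in I} p_i^b$.

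The hard part is not any single estimate but securing uniformity of $b$ over all subsets $J$ and all competing frozen configurations at once: for one prime the claim is merely that a bounded monotone integer sequence stabilizes, whereas here infinitely many configurations must stabilize together. The three-fold extraction is precisely the device that converts this into a single stabilizing comparison, and the normalization to one large exponent $\delta$ is what keeps the counterexamples comparable so that Observation~\ref{obs:sequence} can be applied to the remaining free data.
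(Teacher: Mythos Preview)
Your proposal is correct and follows essentially the same argument as the paper: the monotonicity reduction to a single large exponent $\delta$, the contradiction hypothesis producing one counterexample per $b$, the three successive extractions (constant $J$, constant right-hand value $S$, coordinate-wise non-decreasing frozen tuples via Observation~\ref{obs:sequence}), and the final comparison at some $t_b \ge \delta^{(t_1)}$ are all identical in structure to the paper's proof. Your closing remark that a working threshold can be increased to ensure $b \ge \max_i a_i$ is a small completeness point the paper leaves implicit.
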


	Finally, we prove Theorem~\ref{general_period} using the lemma above.
	%\begin{repthm}{general_period}	
	%	The sequence $\sigma_z$ is periodic for all positive integers $z$ greater than 1.
	%\end{repthm}
		\begin{proof}[Proof of Theorem~\ref{general_period}]
			Let $z = \prod_{i \in I} p_i^{a_i}$ be the prime factorization of $z$. We show that $\sigma_{z}$
			is periodic with the period
			$m_z =  \prod_{i \in I} p_i^{b}$ given by Lemma~\ref{m_z} satisfying condition (\ref{cond_mz}).
			
			We aim to show that $\sigma_z(x) = \sigma_z(x + \alpha m_z)$
			for all $x \leq m_z$ and for all $\alpha \in \N_0$.
			We can express $x$ as $x = r \prod_{i \in I} p_i^{c_i}$ such that $\gcd(r, p_i) = 1$ for all $i \in I$. 
			Lemma~\ref{Lem_prevseq} implies $\sigma_z(x) = \sigma_z(\prod_{i \in I} p_i^{c_i})$. We consider two cases:
			\begin{enumerate}
			 \item $c_i < b$ for all $i \in I$. \newline
			 	In this case $x + \alpha m_z = \prod_{i \in I} p_i^{c_i}(r + \alpha \prod_{i \in I} p_i^{b - c_i})$. Therefore,
			 	$\gcd(r + \alpha \prod_{i\in I} p_i^{b - c_i}, p_j) = 1$ for all $j \in I$ since $\gcd(r,p_j) = 1$. Lemma~\ref{Lem_prevseq}
			 	implies	$\sigma_z(x + \alpha m_z) = \sigma_z(\prod_{i \in I} p_i^{c_i}) = \sigma_z(x)$.
			 \item There is  $i \in I$ such that $c_i \geq b$. \newline
				 Let $J := \{i \in I; c_i < b\}$,
				 $K := \{i \in I; c_i = b\}$ and 
				 $L := \{i \in I; c_i > b\}$. Then we get
				 \begin{align*}
				 	x + \alpha m_z = \prod_{i \in J}p_{i}^{c_i}\prod_{i \in K \cup L} p_{i}^{b}
			 		\left(r\prod_{i \in L}p_{i}^{c_i - b} + \alpha \prod_{i \in J}p_{i}^{b - c_i}\right). 
				 \end{align*} 
					 
			 	The expression $\left(r\prod_{i \in L}p_{i}^{c_i - b} + \alpha \prod_{i \in J}p_{i}^{b - c_i}\right)$ is not divisible
			 	by $p_i$ for $i \in J \cup L$. However, it could be divisible by $p_i$ for $i \in K$. Therefore,
			 	\begin{align*}
			 		x + \alpha m_z = h\left(\prod_{i \in J}p_{i}^{c_i}\prod_{i \in K \cup L} p_{i}^{d_{i}}\right)
			 	\end{align*}
			 	for some $d_i \geq b$ where $i \in K \cup L$\footnote{In fact, $d_i = b$ for $i \in L$.} and $h \in \N$ 
			 	such that $\gcd(h, p_i) = 1$ for all $i \in I = J \cup K \cup L$.
			 	Now, we use the property of $m_z$ (see (\ref{cond_mz})) and  Lemma~\ref{Lem_prevseq}:
			 	\begin{align*}
			 	\sigma_z\left(x + \alpha m_z\right) &= 
			 	\sigma_z\left( h\prod_{i \in J}p_{i}^{c_i}\prod_{i \in K \cup L} p_{i}^{d_{i}}\right)
			 		\myeq{$\ast$}\sigma_z\left(\prod_{i \in J}p_{i}^{c_i}\prod_{i \in K \cup L} p_{i}^{d_{i}}\right) \myeq{$\star$} 
			 		 \sigma_z\left(\prod_{i \in J}p_{i}^{c_i}\prod_{i \in K \cup L} p_{i}^{b}\right) \\
                    &\myeq{$\triangle$} \sigma_z\left(\prod_{i \in J}p_{i}^{c_i}\prod_{i \in K \cup L} p_{i}^{c_i}\right)
			 		= \sigma_z\left(\prod_{i \in I}p_{i}^{c_i}\right) = \sigma_z(x),
			 	\end{align*}
			 	where $\ast$ holds by Lemma~\ref{Lem_prevseq}, $\star$ and $\triangle$ hold by
			 	the property of $m_z$ (see (\ref{cond_mz}))
			 	since $c_i < b$ for $i \in J$, $d_i \geq b$ for $i \in K \cup L = \overline{J}$ ($\star$) and 
			 	$c_i \geq b$ for $i \in K \cup L = \overline{J}$ ($\triangle$).
				\end{enumerate}
		 	That is, $\sigma_z$ is periodic for all $z > 1$ with the period $m_z$.
		\end{proof}

	Note, that it is Lemma~\ref{m_z} what makes the proof of this theorem existence.
	
\section{Proof of Theorem~\ref{max_p}}	\label{sec:proof_thm_max_p}
	In the last section, we prove Theorem~\ref{max_p}.
	 
	First of all, we need a tool for determining whether three points are collinear. Let $A = (a_1, a_2),B = (b_1, b_2),C = (c_1, c_2)$ 
	be the points on the torus $\T{m}{n}$.	Let $D(A,B,C)$ denote the following determinant.
	\begin{align*}
		\begin{vmatrix}
		a_1 & b_1 & c_1 \\ 
		a_2 & b_2 & c_2 \\
		1 & 1 & 1 \\ 
		\end{vmatrix}.
	\end{align*}

	\begin{lem}[see Lemma 3.2. in \cite{Misiak}] \label{lem_det}
		 Let $m,n$ be positive integers and let $A = (a_1, a_2),B = (b_1, b_2),C = (c_1, c_2)$ be the points on the torus $\T{m}{n}$.
		 \begin{enumerate}
		 	\item $A,B,C$ are not collinear if and only if there exist $\alpha_1, \alpha_2, \beta_1, \beta_2, \delta_1, \delta_2, \gamma_1, \gamma_2 \in \Z$
			 such that $D((a_1 + \alpha_1m, a_2 + \alpha_2n), (b_1 + \beta_1m, b_2 + \beta_2n), (c_1 + \gamma_1m, c_2 + \gamma_2n)) = 0$.
			\item If $A,B,C$ are collinear then $D((a_1, a_2), (b_1, b_2), (c_1, c_2)) \equiv 0 \pmod{\gcd(m,n)}$.
		 \end{enumerate}
	\end{lem}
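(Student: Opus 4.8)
The plan is to prove both parts from a single geometric dictionary between lines on the torus and honest lines in $\Z^2$, after which part~(2) becomes a short multilinear computation. Throughout, by a \emph{lift} of the triple $(A,B,C)$ I mean any choice of integer representatives $\tilde A = (a_1 + \alpha_1 m,\, a_2 + \alpha_2 n)$, $\tilde B = (b_1 + \beta_1 m,\, b_2 + \beta_2 n)$, $\tilde C = (c_1 + \gamma_1 m,\, c_2 + \gamma_2 n)$ with all offsets in $\Z$; each lift projects back to $(A,B,C)$ under $\pi_{m,n}$, and distinct toric points necessarily have distinct lifts, since coinciding lifts would project to coinciding toric points.

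For part~(1) I would first set up the correspondence in both directions. Because a line on $\T{m}{n}$ is by definition the image $\pi_{m,n}(\ell)$ of an honest line $\ell \subseteq \Z^2$, the points $A,B,C$ are collinear on the torus exactly when there is an honest line $\ell$ whose projection contains all three. In that case, selecting for each of $A,B,C$ the representative that actually lies on $\ell$ yields a lift whose three points are collinear in the plane, i.e.\ a lift with $D = 0$. Conversely, if some lift $(\tilde A,\tilde B,\tilde C)$ has $D = 0$, then these integer points are collinear in $\Z^2$, hence lie on a common honest line $\ell$ (which contains at least two integer points), and $\pi_{m,n}(\ell)$ is then a toric line through $A,B,C$. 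This establishes the equivalence that $A,B,C$ are collinear on $\T{m}{n}$ if and only if some lift has $D = 0$. Taking negations gives the determinant criterion of part~(1) for the non-collinear case: the points $A,B,C$ are \emph{not} collinear precisely when $D \neq 0$ for every lift.

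The one genuine subtlety to be careful about in part~(1) is the quantifier over lifts. On a torus two points typically lie on several distinct lines at once, because different lifts of the second point give different primitive directions; consequently ``collinear'' must be read as the existence of a \emph{single} honest line — equivalently a single lift — simultaneously realising all three points, and it is this existential quantifier, ranging over all six independent coordinate offsets, that the criterion encodes. Getting this projection/lift correspondence and its quantifier exactly right is the main conceptual step; everything else is bookkeeping, including the harmless degenerate cases where the honest line is axis-parallel or two of the points coincide.

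For part~(2) I would reduce to part~(1) and compute. If $A,B,C$ are collinear, part~(1) supplies a lift with $D(\tilde A,\tilde B,\tilde C) = 0$. Expanding $D$ along its row of ones shows that every monomial of $D$ is a product of exactly one first coordinate (one of $a_1,b_1,c_1$) and one second coordinate (one of $a_2,b_2,c_2$). Substituting the lift and expanding each such product, the difference $D(\tilde A,\tilde B,\tilde C) - D(A,B,C)$ becomes a sum of terms each carrying a factor $m$ (from a perturbed first coordinate) or a factor $n$ (from a perturbed second coordinate); since $\gcd(m,n)$ divides both $m$ and $n$, it divides the entire difference. As the lifted determinant vanishes, this yields $D(A,B,C) \equiv 0 \pmod{\gcd(m,n)}$, as claimed. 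The main obstacle overall is thus concentrated in part~(1); part~(2) is routine once a vanishing lift is in hand.
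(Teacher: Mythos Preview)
The paper does not supply its own proof of this lemma; it is quoted from \cite{Misiak} (Lemma~3.2 there), so there is no in-paper argument to compare against. Your proof is correct and is the natural one: unwinding the definition of a toric line as the $\pi_{m,n}$-image of an honest line in $\Z^2$ immediately gives the equivalence in part~(1), and the multilinearity of $D$ in its columns gives part~(2) once a vanishing lift is available.

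One remark worth recording: as printed in the paper, part~(1) carries an evident typo. It reads ``\emph{not} collinear if and only if there \emph{exist} offsets with $D=0$'', whereas the correct assertion --- and the one actually invoked later in the proof of Theorem~\ref{max_p}, where one shows $D\neq 0$ for \emph{all} offsets to conclude non-collinearity --- is ``collinear if and only if there exist offsets with $D=0$'', equivalently ``not collinear if and only if $D\neq 0$ for every lift''. You silently repaired this and proved the intended version, which is exactly right. (The variables $\delta_1,\delta_2$ listed in the statement are also spurious; only $\alpha_i,\beta_i,\gamma_i$ appear in the determinant.)
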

	
	In our proof of Theorem~\ref{max_p} we will also need to be able to determine the length of a line on a torus.
	
	\begin{lem}\label{lem_length}
		Let $m,n$ be positive integers and $\ell = \{\pi_{m,n}(A + k (u, v)); k \in \mathbb{Z}\}$ be a line on $\T{m}{n}$,
		where $A, (u,v) \in T_{m \times n}$ such that $ \gcd(u, v) = 1$.
		Then the length of $\ell$ is $\lcm\left(\frac{m}{\gcd(m, u)}, \frac{n}{\gcd(n, v)}\right)$. 
		\begin{proof}
			The line $\ell$ is given by the vector $(u, v)$ and its length is equal to the order of the element $(u, v)$ in
			$\mathbb{Z}_{m} \times \mathbb{Z}_{n}$. Which is
			\begin{align*}
				\ordd{m}{n}(u, v) = 
				\text{lcm}(\text{ord}_{\mathbb{Z}_{m}}(u), \ord{n}(v)).
			\end{align*}		  
	    	The order of an element $h$ in the group $\Z_{x}$ is
			\begin{align*}
				 \ord{x}(h) = \frac{\lcm(x,h)}{h}= \frac{\frac{xh}{ \gcd(x,h)}}{h} = \frac{x}{\gcd(x,h)}.
			\end{align*}	    	    
		\end{proof}
	\end{lem}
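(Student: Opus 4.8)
The plan is to reinterpret the length of $\ell$ as the order of a group element in $\Z_m \times \Z_n$ and then evaluate that order by elementary means. Since every point of $\ell$ is obtained from the base point $A$ by adding a multiple of $(u,v)$, translating by $-A$ gives a bijection between $\ell$ and the line through the origin, so the two have the same length; hence it suffices to count the distinct points of $\{\pi_{m,n}(k(u,v)); k \in \Z\}$. Identifying $\T{m}{n}$ with $\Z_m \times \Z_n$ as in the introduction, this set is precisely the cyclic subgroup generated by $(u,v)$, so its cardinality is $\ordd{m}{n}(u,v)$.

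Next I would invoke the standard description of the order of an element in a direct product. We have $k(u,v) = (0,0)$ in $\Z_m \times \Z_n$ exactly when $ku \equiv 0 \pmod m$ and $kv \equiv 0 \pmod n$ hold simultaneously, i.e. when $k$ is a common multiple of $\ord{m}(u)$ and $\ord{n}(v)$. The least positive such $k$ is therefore their least common multiple, giving
\[
    \ordd{m}{n}(u,v) = \lcm\bigl(\ord{m}(u), \ord{n}(v)\bigr).
\]

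Finally I would compute the order of a single cyclic element: for $h \in \Z_x$, the order is the smallest positive $k$ with $x \mid kh$, which equals $x/\gcd(x,h)$ (equivalently $\lcm(x,h)/h$, since $\lcm(x,h)=xh/\gcd(x,h)$). Substituting this for both coordinates yields $\ord{m}(u) = m/\gcd(m,u)$ and $\ord{n}(v) = n/\gcd(n,v)$, and plugging into the displayed formula gives the claimed length $\lcm\bigl(m/\gcd(m,u),\, n/\gcd(n,v)\bigr)$.

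The only step needing any care is the first reduction, namely identifying the geometric \emph{length} of the line with the order of its generating element; this rests on the identification of lines with cosets of cyclic subgroups and on the fact that the coset and the subgroup have the same cardinality. The two order computations are routine facts about finite cyclic groups. I also note that the hypothesis $\gcd(u,v)=1$ is used only to guarantee that $(u,v)$ is a legitimate direction of a line on the torus and plays no role in the order computation itself; in particular the formula correctly handles the degenerate cases $u=0$ or $v=0$ (where one factor collapses to length $1$).
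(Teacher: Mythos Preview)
Your proposal is correct and follows essentially the same approach as the paper: identify the length of $\ell$ with the order of $(u,v)$ in $\Z_m \times \Z_n$, split this as $\lcm(\ord{m}(u),\ord{n}(v))$, and compute each factor as $x/\gcd(x,h)$. You simply supply more detail on the translation-to-origin step and on why the hypothesis $\gcd(u,v)=1$ plays no role in the order computation.
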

	
	We also need lemmas about properties of lines on a torus.	
	
	\begin{lem}\label{lastP_l1}
		Let $m,n$ be positive integers such that $m$ is divisible by $p$ if and only if $n$ is divisible by $p$ for every prime $p$.
		Let $(a_1, a_2)$ be a~point on $\T{m}{n}$ such that $\gcd(a_1,a_2) = 1$. Then there is exactly one line containing
		the origin and $(a_1, a_2)$.
	\begin{proof}
		Let $\ell$ denote the line $\{\pi_{m,n}(k(a_1,a_2)); k \in \Z\}$. This line contains the origin and $(a_1,a_2)$.
		Let us assume that there is another line $\ell' = \{\pi_{m,n}(k(b_1,b_2)); k \in \Z\}$ containing the origin and $(a_1,a_2)$.
		That means there exists $k \in \Z$ such that
		\begin{align*}
			kb_1 &\equiv a_1 \pmod{m}, \\
			kb_2 &\equiv a_2 \pmod{n}.
		\end{align*}
		Therefore, $\ell \subseteq \ell'$. Moreover, $a_1 = kb_1 - s_1 m$ and $a_2 = k b_2 - s_2 n$ for some $s_1, s_2 \in \Z$.
		Since $\gcd\left(m, kb_1 - s_1m\right) = \gcd\left(m,kb_1\right)$ and
		$\gcd\left(n,kb_2 - s_2n\right) = \gcd\left(n,kb_2\right)$, the length of the line $\ell$ is	
		$\lcm\left(\frac{m}{\gcd\left(m,kb_1\right)}, \frac{n}{\gcd\left(n,kb_2\right)}\right)$ by Lemma~\ref{lem_length}.
		By the same lemma the length of $\ell'$ is $\lcm\left(\frac{m}{\gcd\left(m,b_1\right)}, \frac{n}{\gcd\left(n,b_2\right)}\right)$.
		
		Now, we show that $\gcd\left(m,kb_1\right) = \gcd\left(m,b_1\right)$.
		For a contradiction, let us assume $\gcd\left(m,kb_1\right) > \gcd\left(m,b_1\right)$. Then there is a prime $p$ which divides both
		$k$ and $m$.
		By the assumption $p$ also divides $n$ and thus $\gcd(a_1,a_2)= \gcd(kb_1 - s_1 m, k b_2 - s_2 n) \geq p \neq 1$.
		A contradiction. Analogously we get $\gcd\left(n,kb_2\right) = \gcd\left(n,b_2\right)$. Consequently,
		$\ell$ has the same length as $\ell'$. Moreover, $\ell = \ell'$ since $\ell \subseteq \ell'$.
	\end{proof}	
	\end{lem}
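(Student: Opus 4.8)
The plan is to separate existence from uniqueness; existence is immediate and all the work is in uniqueness. For existence, since $\gcd(a_1,a_2)=1$ the set $\ell := \{\pi_{m,n}(k(a_1,a_2)); k \in \Z\}$ is the image of the line through $0$ and $(a_1,a_2)$ in $\Z^2$, hence a genuine line on $\T{m}{n}$, and it clearly contains both the origin and $(a_1,a_2)$. So it remains to rule out a second line through these two points.

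For uniqueness, suppose $\ell' = \{\pi_{m,n}(k(b_1,b_2)); k \in \Z\}$ with $\gcd(b_1,b_2)=1$ also passes through the origin and $(a_1,a_2)$. Then there is $k \in \Z$ with $kb_1 \equiv a_1 \pmod m$ and $kb_2 \equiv a_2 \pmod n$; multiplying these congruences by an arbitrary integer shows every point of $\ell$ lies on $\ell'$, so $\ell \subseteq \ell'$. Hence it suffices to show $|\ell| = |\ell'|$. By Lemma~\ref{lem_length}, $|\ell| = \lcm(m/\gcd(m,a_1), n/\gcd(n,a_2))$ and $|\ell'| = \lcm(m/\gcd(m,b_1), n/\gcd(n,b_2))$. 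Writing $a_1 = kb_1 - s_1 m$ and $a_2 = kb_2 - s_2 n$ for suitable $s_1,s_2 \in \Z$, we get $\gcd(m,a_1) = \gcd(m,kb_1)$ and $\gcd(n,a_2) = \gcd(n,kb_2)$, so everything reduces to the two identities $\gcd(m,kb_1)=\gcd(m,b_1)$ and $\gcd(n,kb_2)=\gcd(n,b_2)$.

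This reduction is the heart of the argument and the only place the hypothesis on $m,n$ is used. The divisibility $\gcd(m,b_1)\mid\gcd(m,kb_1)$ is automatic, so assume the inequality is strict; writing $d := \gcd(m,b_1)$ and cancelling $d$ (using $\gcd(b_1/d,\, m/d)=1$) forces $\gcd(m/d,\, k) > 1$, so some prime $p$ divides both $m$ and $k$. By the shared-prime-support hypothesis $p$ divides $n$ as well; but then $p \mid k$ and $p \mid m$ give $p \mid a_1 = kb_1 - s_1 m$, while $p \mid k$ and $p \mid n$ give $p \mid a_2 = kb_2 - s_2 n$, contradicting $\gcd(a_1,a_2)=1$. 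The identical computation in the second coordinate yields $\gcd(n,kb_2)=\gcd(n,b_2)$. Therefore $|\ell| = |\ell'|$, and together with $\ell \subseteq \ell'$ this gives $\ell = \ell'$.

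I expect the only mildly delicate point to be keeping the gcd bookkeeping clean; conceptually the crux is transparent --- a prime dividing the ``new'' part of a gcd in one coordinate is, by the hypothesis, simultaneously a divisor in the other coordinate, which is exactly what contradicts the primitivity of $(a_1,a_2)$.
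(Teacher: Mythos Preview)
Your proof is correct and follows essentially the same approach as the paper: construct $\ell$ from the direction $(a_1,a_2)$, show $\ell\subseteq\ell'$ via the congruences for $k$, compute both lengths via Lemma~\ref{lem_length}, and use the shared-prime-support hypothesis to rule out a prime dividing both $k$ and $m$ (hence $n$) since that would force $p\mid a_1$ and $p\mid a_2$. The only cosmetic differences are that you spell out the inclusion $\ell\subseteq\ell'$ and the cancellation of $d=\gcd(m,b_1)$ a bit more explicitly than the paper does.
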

	
	\begin{lem}\label{lastP_l2}
		Let $\T{p^a}{p^b}$ be a torus where $p$ is a prime, $a \leq b$, and let $(x,y)$ be a point on 
		$\T{p^a}{p^b}$. Let $g$ denote $\gcd(x,y)$. If $x$ or $y$ is not divisible by $p$,
		then there is exactly one line between the origin and $(x,y)$.
		\begin{proof}
			%Let $l = \{\pi_{p^a,p^b}(k(\frac{x}{g}, \frac{y}{g})); k \in \Z\}$. Then the point $(x,y)$ is indeed on $l$.%
			Let $\ell = \{\pi_{p^a,p^b}(k(v_1, v_2)); k \in \Z\}$ be a line which contains $(x,y)$. Then for some $k \in \Z$
			\begin{align*}
				kv_1 &\equiv x \pmod{p^a}, \\
				kv_2 &\equiv y \pmod{p^b}.
			\end{align*}
			Let $g$ denote $\gcd(x,y)$. Since $\gcd(g,p) = 1$, $g$ has an inverse element modulo $p^b$. Let us denote it $g^{-1}$. Then
			\begin{align*}
				g^{-1}kv_1 &\equiv \frac{x}{g} \pmod{p^a}, \\
				g^{-1}kv_2 &\equiv \frac{y}{g} \pmod{p^b}.
			\end{align*}
			Hence every line on $\T{p^a}{p^b}$ contains $(x,y)$ if and only if it contains $(\frac{x}{g}, \frac{y}{g})$.
			Since $\gcd(\frac{x}{g}, \frac{y}{g}) = 1$, Lemma~\ref{lastP_l1} implies that
			there is exactly one line containing the origin and $(\frac{x}{g}, \frac{y}{g})$. Consequently, there is exactly one line containing
			the~origin and the point $(x,y)$.
		\end{proof}
	\end{lem}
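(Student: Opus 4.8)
The plan is to reduce the statement to the case of a \emph{primitive} point — one whose two coordinates are coprime — and then quote Lemma~\ref{lastP_l1}. The quantity $g = \gcd(x,y)$ measures the failure of primitivity, and the hypothesis ``$x$ or $y$ is not divisible by $p$'' is exactly what forces $\gcd(g,p) = 1$: if $p$ divided $g$ it would divide both $x$ and $y$. Hence $g$ is invertible modulo every power of $p$, and I write $g^{-1}$ for such an inverse.

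First I would show that a line through the origin passes through $(x,y)$ if and only if it passes through $(x/g, y/g)$. If $\ell = \{\pi_{p^a,p^b}(k(v_1,v_2)); k \in \Z\}$ contains $(x,y)$, then $kv_1 \equiv x \pmod{p^a}$ and $kv_2 \equiv y \pmod{p^b}$ for some $k$; multiplying both congruences by $g^{-1}$ and using $g^{-1}x \equiv x/g \pmod{p^a}$ and $g^{-1}y \equiv y/g \pmod{p^b}$ — which hold because $g\cdot(x/g) = x$ and $g\cdot(y/g) = y$ as integers and $g$ is a unit at each modulus — shows that $(x/g, y/g)$ also lies on $\ell$ (it equals $(g^{-1}k)(v_1,v_2)$ modulo the torus sizes). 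Running the same computation with $g$ in place of $g^{-1}$ gives the converse. Since $(x,y)$ is a point of the torus we have $0 \le x < p^a$ and $0 \le y < p^b$, so $x/g$ and $y/g$ are already the canonical representatives of their classes, and $\gcd(x/g, y/g) = 1$ as integers; thus $(x/g, y/g)$ is a primitive point of $\T{p^a}{p^b}$.

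Finally I would apply Lemma~\ref{lastP_l1} with $m = p^a$ and $n = p^b$: its hypothesis — a prime divides $m$ if and only if it divides $n$ — holds trivially, since (using $a \ge 1$) both $p^a$ and $p^b$ are divisible by exactly the prime $p$. The lemma gives a unique line through the origin and $(x/g, y/g)$, so by the equivalence above there is a unique line through the origin and $(x,y)$. I do not expect a genuine obstacle here; the only thing to watch is the bookkeeping across the two moduli $p^a$ and $p^b$ — that $g$ is a common unit, and that the integer identity $\gcd(x/g, y/g) = 1$ transfers to the torus, which it does precisely because $x/g$ and $y/g$ are not reduced at all. The point of isolating this lemma is that it is the exact form of ``unique line through a prescribed point'' that a collinearity count on $\T{p^a}{p^b}$ will need later.
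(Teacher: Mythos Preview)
Your proof is correct and follows essentially the same route as the paper's: reduce to the primitive point $(x/g,\,y/g)$ by multiplying through by the unit $g^{-1}$, then invoke Lemma~\ref{lastP_l1}. If anything you are slightly more careful than the paper, explicitly checking the divisibility hypothesis of Lemma~\ref{lastP_l1} and noting the converse direction of the ``contains $(x,y)$ iff contains $(x/g,y/g)$'' equivalence.
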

	
	\begin{dusl} \label{cor:lines}
		Let $\T{p^a}{p^b}$ be a torus where $p$ is a prime and let $A = (a_1,a_2), B=(b_1,b_2) \in \T{p^a}{p^b}$ be points on that~torus
		such that $a_1 - b_1$ is not divisible by $p$. Then there is exactly one line between $A,B$ and its length is
		$\max\left\lbrace p^a,\frac{p^b}{\gcd(p^b, a_2 - b_2)}\right\rbrace$.
		\begin{proof}
			The lines between $(a_1,a_2),(b_1,b_2)$ are in one-to-one correspondence with the lines between the origin and 
			$P := \pi_{p^a,p^b}(a_1-b_1, a_2-b_2)$. Since $a_1-b_1$ is not divisible by $p$ there is exactly one line between the origin and $P$
			by Lemma~\ref{lastP_l2}.
			
			Such line is given by the vector $\pi_{p^a,p^b}(\frac{a_1-b_1}{\gcd(a_1-b_1, a_2-b_2)},\frac{a_2-b_2}{\gcd(a_1-b_2, a_2-b_2)})$.
			We can express it as $(v,wp^r)$, where $v,w$ is not divisible by $p$ and $r < b$. Moreover, 
			$p^r = \gcd(p^b, \frac{a_2-b_2}{\gcd(a_1-b_1, a_2-b_2)}) = \gcd(p^b, a_2-b_2)$ since $a_1 - b_1$ is not divisible by $p$.
			Lemma~\ref{lem_length} implies the length of the line between the origin and $P$ is 
			\begin{align*}
				\lcm\left(\frac{p^a}{\gcd(p^a, v)},\frac{p^b}{\gcd(p^b, wp^r)}\right) = \lcm\left(p^a, \frac{p^b}{p^r}\right)
				= \max\left\lbrace p^a,\frac{p^b}{p^r} \right\rbrace =  \max\left\lbrace p^a,
				\frac{p^b}{\gcd(p^b, a_2 - b_2)}\right\rbrace.
			\end{align*}
		\end{proof}
	\end{dusl}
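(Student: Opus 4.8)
The plan is to reduce the assertion to lines through the origin and then apply the two preceding lemmas. First I would translate the torus by $-B$: this gives a bijection between the lines through $A$ and $B$ and the lines through the origin and the point $P := \pi_{p^a,p^b}(a_1-b_1,\,a_2-b_2)$, and it clearly preserves line lengths. Since $p \nmid (a_1-b_1)$, the first coordinate of $P$ is not divisible by $p$, so Lemma~\ref{lastP_l2} (which only needs one coordinate to be coprime to $p$) applies and gives that there is exactly one line through the origin and $P$. Pulling this back through the translation settles the uniqueness part.

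For the length, write $g := \gcd(a_1-b_1,\,a_2-b_2)$ and note that the unique line is spanned by the primitive vector $\left(\frac{a_1-b_1}{g},\,\frac{a_2-b_2}{g}\right)$. Because $p \nmid (a_1-b_1)$ we get $p \nmid g$ and hence $p \nmid \frac{a_1-b_1}{g}$, so I can write this direction vector as $(v,\,wp^r)$ with $p \nmid v$, $p \nmid w$ (taking $w=0$, $r=b$ in the degenerate case $a_2\equiv b_2 \pmod{p^b}$), and $0 \le r \le b$. The key arithmetic observation is that dividing by the $p$-coprime integer $g$ does not change the exact power of $p$ dividing the second coordinate, so $p^r = \gcd\!\left(p^b,\,\frac{a_2-b_2}{g}\right) = \gcd(p^b,\,a_2-b_2)$.

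Finally I would invoke Lemma~\ref{lem_length}: the length of the line is $\lcm\!\left(\frac{p^a}{\gcd(p^a,v)},\,\frac{p^b}{\gcd(p^b,wp^r)}\right)$. Since $p \nmid v$ we have $\gcd(p^a,v)=1$, and since $p \nmid w$ we have $\gcd(p^b,wp^r)=p^r$, so this equals $\lcm\!\left(p^a,\,\frac{p^b}{p^r}\right)$; both arguments are powers of $p$, so their least common multiple is the larger one, giving $\max\!\left\{p^a,\,\frac{p^b}{p^r}\right\} = \max\!\left\{p^a,\,\frac{p^b}{\gcd(p^b,\,a_2-b_2)}\right\}$, as claimed.

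The only real subtlety — and the step I expect to require the most care — is the claim $\gcd\!\left(p^b,\,\frac{a_2-b_2}{g}\right) = \gcd(p^b,\,a_2-b_2)$, which rests on $\gcd(g,p)=1$, together with checking the degenerate case $a_2 \equiv b_2 \pmod{p^b}$: there the direction vector is the ``horizontal'' $(1,0)$, its length is $p^a$, and the formula still reads correctly under the convention $\gcd(p^b,0)=p^b$. Everything else is a direct application of Lemmas~\ref{lastP_l2} and~\ref{lem_length}.
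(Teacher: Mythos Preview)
Your argument is correct and follows essentially the same route as the paper: translate to the origin, invoke Lemma~\ref{lastP_l2} for uniqueness via the coprimality of $a_1-b_1$ with $p$, identify the primitive direction vector as $(v,wp^r)$, use $\gcd(g,p)=1$ to conclude $p^r=\gcd(p^b,a_2-b_2)$, and finish with Lemma~\ref{lem_length}. Your explicit handling of the degenerate case $a_2\equiv b_2\pmod{p^b}$ and the justification that $p\nmid g$ are small clarifications beyond what the paper writes, but the structure is the same.
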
	
	
	\begin{lem}\label{lastP_l3}
		Let $\T{p^a}{p^b}$ be a torus where $p$ is a prime and let $A,B, C$
		be points on it such that $\pi_{p^{a-1}, p^{b}}(A) = \pi_{p^{a-1}, p^{b}}(B)$.
		If each line $\ell$ containing
		the points $A, C$ on $\T{p^a}{p^b}$ has the same length as its image $\ell' := \pi_{p^{a-1}, p^{b}}(\ell)$
		then $A,B,C$ are not collinear. (See Figure~\ref{Ex_1} for an example.)
		\begin{proof}
			If there is a line $\ell$ which contains $A,B,C$ then its image $\ell^\prime$
			has smaller length since $\pi_{p^{a-1}, p^{b}}(A) = \pi_{p^{a-1}, p^{b}}(B)$. However, it is impossible by our assumption.
		\end{proof}		
 	\end{lem}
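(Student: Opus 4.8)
The plan is to argue by contradiction: assume $A$, $B$, $C$ lie on a common line $\ell$ on $\T{p^a}{p^b}$ and derive a contradiction with the length hypothesis. The crux is to show that the collapse $\pi_{p^{a-1},p^b}(A)=\pi_{p^{a-1},p^b}(B)$ together with $A\neq B$ forces the projected line $\ell'=\pi_{p^{a-1},p^b}(\ell)$ to be strictly shorter than $\ell$, whereas the hypothesis (applied to the line $\ell$, which contains both $A$ and $C$) asserts that they have equal length.

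First I would pass to the group-theoretic picture from the paper's footnote, identifying $\T{p^a}{p^b}$ with $G:=\Z_{p^a}\times\Z_{p^b}$ and $\T{p^{a-1}}{p^b}$ with $G':=\Z_{p^{a-1}}\times\Z_{p^b}$, so that $\pi:=\pi_{p^{a-1},p^b}$ becomes the coordinatewise reduction homomorphism $G\to G'$. Its kernel is $\{(t\,p^{a-1},0)\colon 0\le t<p\}$, a cyclic group of order $p$. Lines on $G$ are exactly cosets of cyclic subgroups $\langle(u,v)\rangle$; since $\pi$ is a homomorphism, the image of such a coset is again a coset of a cyclic subgroup, i.e. a line on $G'$, whose length (the order of the image subgroup) divides the length of $\ell$. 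Because translation preserves both lengths and the relation $\pi(A)=\pi(B)$, I may assume $\ell=\langle(u,v)\rangle$ passes through the origin.

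Next I would exploit the distinctness $A\neq B$. Since $A,B\in\ell$ lie in the same coset, $A-B$ belongs to the subgroup $H:=\langle(u,v)\rangle$; since $\pi(A)=\pi(B)$, also $A-B\in\ker\pi$; and $A\neq B$ makes $A-B$ nonzero. Hence $H\cap\ker\pi$ is nontrivial, and as $\ker\pi$ has prime order $p$ this forces $\ker\pi\subseteq H$. The index computation $|\pi(H)|=|H|/|H\cap\ker\pi|$ then gives $|\ell'|=|\ell|/p$, strictly less than $|\ell|$. One can reach the same conclusion via Lemma~\ref{lem_length} by writing the direction vector of $\ell$ as $(u,v)$ and tracking how $\gcd(p^a,u)$ changes into $\gcd(p^{a-1},u)$ under the projection, but the subgroup-index argument is cleaner.

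Finally I would close the contradiction: the line $\ell$ contains both $A$ and $C$, so the hypothesis of the lemma yields $|\ell|=|\ell'|$, contradicting the strict inequality just obtained. Therefore no line contains all three of $A,B,C$, so they are not collinear. I expect the only genuinely delicate point to be the justification that the projection of a line is again a line and that the fibers of $\pi$ restricted to $\ell$ all have the same size $|\ell|/|\ell'|$ (so that non-injectivity is equivalent to a strict decrease in length); this is precisely what the homomorphism/coset description supplies, and it is exactly what the author's one-line phrase ``has smaller length'' silently invokes.
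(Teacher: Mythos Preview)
Your argument is correct and follows exactly the same route as the paper: assume a common line $\ell$ through $A,B,C$, observe that the collision $\pi(A)=\pi(B)$ forces $|\ell'|<|\ell|$, and contradict the equal-length hypothesis applied to this particular $\ell$ (which contains $A$ and $C$). The paper compresses this into two sentences and leaves the ``smaller length'' claim unjustified; your kernel--intersection computation $|H\cap\ker\pi|=p$ and the resulting $|\pi(H)|=|H|/p$ is precisely the missing justification, so your proof is a faithful elaboration rather than a different approach.
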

 	
 	\begin{figure}
 		\centering
 		\includegraphics[scale=1]{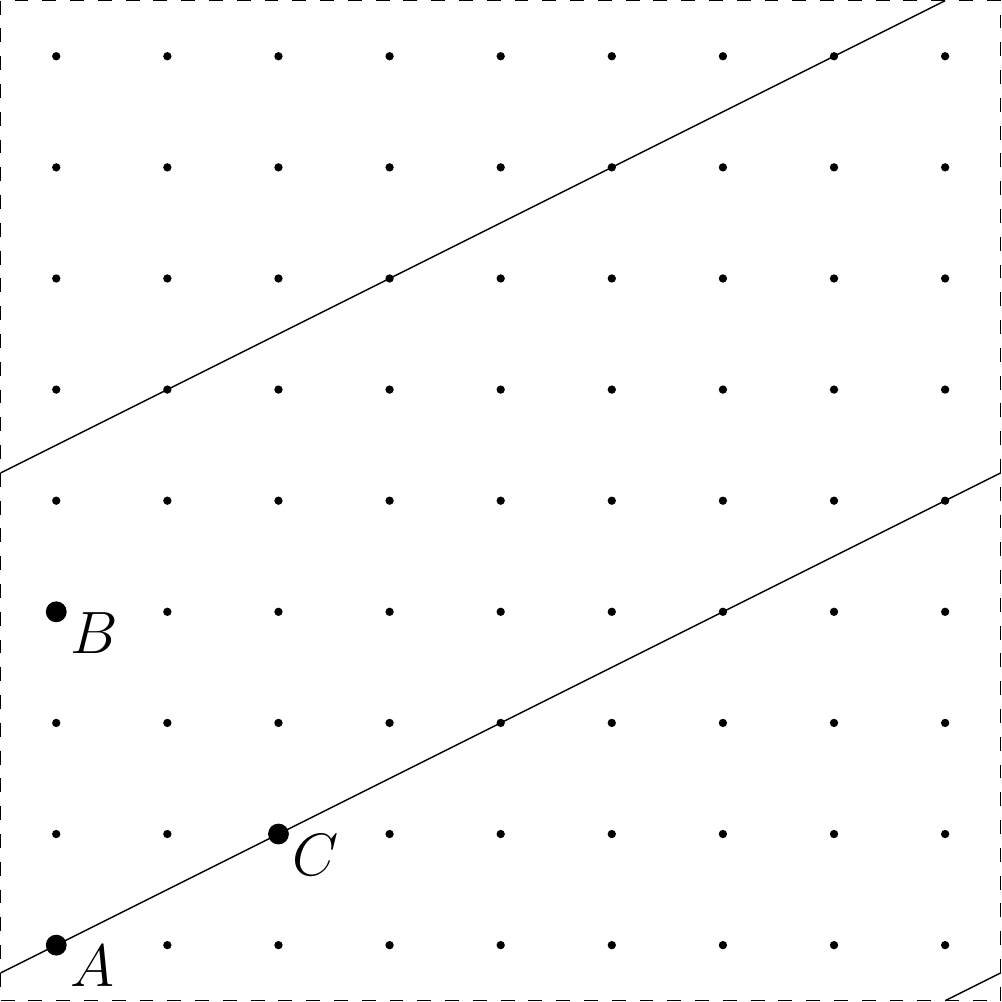}
		\\[2mm]
 		\includegraphics[scale=1]{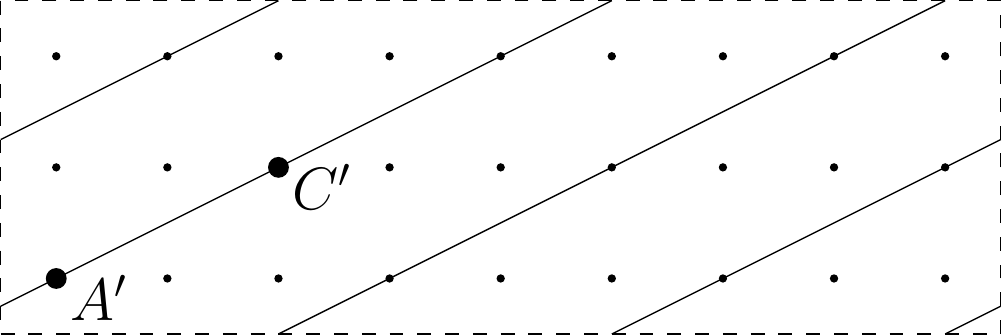}
 		\caption{An application of Lemma~\ref{lastP_l3}. Let $A = (0,0),B =(3,0)$ and $C = (1,2)$ be points on $T_{9,9}$.
 		There is exactly one line between $A$ and $C$ by Lemma~\ref{lastP_l1} and this line has the length $9$ (see the upper picture). 		
 		Now, let us consider points $A^\prime =\pi_{3,9}(A)$ and $C^\prime = \pi_{3,9}(C)$ on $T_{9,9}$. The line containing
 		$A^\prime$ and $C^\prime$ also has the length 9 (see the lower picture). Since $A^\prime = \pi_{3,9}(B)$,
 		we conclude that $A,B,C$ are not collinear.}
 		\label{Ex_1} 		
 	\end{figure}
 	
 	\begin{lem} \label{lem_more_lines}
 		Let $\T{p^a}{p^b}$ be a~torus where $p$ is a prime and $a \leq b$. Let
 		$(v_1p^c,v_2p^d)$ be a~point on $\T{p^a}{p^b}$ such that $p$ does not divide 
 		$v_1$ or $v_2$, $c < a$, $d < b$ and $c \leq d$. Then each line which contains the origin and 
 		$(v_1p^c,v_2p^d)$ also contains some point $(w_1,w_2)$ such that $w_1, w_2$ satisfy the following equations.
 		\begin{align*}
 			w_1 &\equiv v_1 \pmod{p^{a-c}}, \\
 			w_2 &\equiv v_2p^{d-c} \pmod{p^{b-c}}.
 		\end{align*}
 		\begin{proof}
 			Let $\ell$ be a line which contains the origin and $(v_1p^c,v_2p^d)$. Then we can express it as
 			$\ell = \{\pi_{p^a,p^b}(k(u_1,u_2)); k \in \Z\}$ such that $\gcd(u_1,u_2) = 1$ and there exists $k \in \Z$
 			which satisfies the following equations.
 			\begin{align*}
 				ku_1 &\equiv v_1p^c \pmod{p^a}, \\
 				ku_2 &\equiv v_2p^d \pmod{p^b}.
 			\end{align*}
 			Since $\gcd(u_1,u_2) = 1$, $p$ does not divide $u_1$ or $u_2$. Therefore, $p^c$ has to
 			divide $k$ since $c \leq d$. Consequently,
 			\begin{align*}
 				w_1 := \frac{k}{p^c}u_1 &\equiv v_1 \pmod{p^{a-c}}, \\
 				w_2 := \frac{k}{p^c}u_2 &\equiv v_2p^{d-c} \pmod{p^{b-c}}.
 			\end{align*}
 		\end{proof}
 	\end{lem}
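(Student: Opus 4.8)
The plan is to parametrise the line and reduce the whole statement to a single divisibility fact about the parameter. Fix a line $\ell$ through the origin and $(v_1p^c,v_2p^d)$ and, as in the proof of Lemma~\ref{lastP_l1}, write it as $\ell=\{\pi_{p^a,p^b}(k(u_1,u_2)); k\in\Z\}$ with $\gcd(u_1,u_2)=1$. Since $(v_1p^c,v_2p^d)$ lies on $\ell$, there is an integer $k$ with
\begin{align*}
    ku_1 &\equiv v_1p^c \pmod{p^a},\\
    ku_2 &\equiv v_2p^d \pmod{p^b}.
\end{align*}
Once I know $p^c\mid k$, I set $w_1:=(k/p^c)u_1$ and $w_2:=(k/p^c)u_2$; then $\pi_{p^a,p^b}(w_1,w_2)=\pi_{p^a,p^b}((k/p^c)(u_1,u_2))$ is the point of $\ell$ with parameter $k/p^c$, so it lies on $\ell$, and dividing the two congruences by $p^c$ (legitimate because $c<a$ for the first and $c\le d<b$ for the second) yields precisely $w_1\equiv v_1\pmod{p^{a-c}}$ and $w_2\equiv v_2p^{d-c}\pmod{p^{b-c}}$. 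Thus everything reduces to the claim $p^c\mid k$.

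To establish $p^c\mid k$ I would compare $p$-adic valuations across each congruence, writing $v_p$ for the exponent of $p$ in an integer. The elementary fact I use is that $x\equiv y\pmod{p^e}$ forces $v_p(x)\ge\min\{v_p(y),e\}$. Applied to the first congruence, $c<a$ and $v_p(v_1p^c)=c+v_p(v_1)\ge c$ give $v_p(ku_1)\ge c$; applied to the second, $d<b$ gives $v_p(ku_2)\ge d$. I then split according to $\gcd(u_1,u_2)=1$. If $p\nmid u_1$, then $v_p(k)=v_p(ku_1)\ge c$, as wanted. Otherwise $p\mid u_1$, and coprimality forces $p\nmid u_2$, so $v_p(k)=v_p(ku_2)\ge d\ge c$, the last inequality being exactly $c\le d$. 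In either case $p^c\mid k$.

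The main obstacle is this last valuation bookkeeping, and it is where the quantitative hypotheses earn their keep: a congruence modulo $p^a$ (resp.\ $p^b$) only pins down $v_p$ below the level $a$ (resp.\ $b$), so the strict inequalities $c<a$ and $d<b$ are exactly what let me bound $v_p(ku_1)$ and $v_p(ku_2)$ from below by $c$ and $d$; and the troublesome case $p\mid u_1$ has to be rerouted through the second coordinate, which is precisely where $\gcd(u_1,u_2)=1$ together with $c\le d$ is indispensable. After $p^c\mid k$ is secured, the remainder — dividing each congruence by $p^c$ — is routine.
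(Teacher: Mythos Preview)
Your proof is correct and follows essentially the same approach as the paper: parametrise the line, observe that $\gcd(u_1,u_2)=1$ forces $p\nmid u_1$ or $p\nmid u_2$, deduce $p^c\mid k$ (using $c\le d$ for the second case), and then divide the congruences by $p^c$. The paper compresses the $p^c\mid k$ step into a single sentence, whereas you spell out the $p$-adic valuation bookkeeping and the case split explicitly, but the underlying argument is identical.
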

 	Now, we are able to prove Theorem~\ref{max_p}.
 	%\begin{repthm}{max_p}
 	%	Let $\T{p^a}{p^{(a-1)p+2}}$ be a torus where $p$ is a prime and $a \in \N$. 
 	%	Then $\tau_{p^a,p^{(a-1)p+2}} = 2p^a$.
 	%\end{repthm}
 		\begin{proof}[Proof of Theorem~\ref{max_p}]
 			Theorem~\ref{general_upper} implies $\tau_{p^a,p^{(a-1)p+2}} \leq 2p^a$. Therefore, it is sufficient
 			to show a~construction of $2p^a$ points on $\T{p^a}{p^{(a-1)p+2}}$ where no three of them are collinear.
 			
 			Let  $X :=\{(i, i^2p); i \in P\}$ and $Y := \{(i, i^2p + 1); i \in P\}$, where $P = \{0, \ldots, p-1\}$. Misiak et al. 
 			(see Theorem 1.2(3a) in \cite{Misiak}) proved
 			that no three points from the set $\pi_{p,p^2}(X \cup Y)$ are collinear on the torus $\T{p}{p^2}$. 
 			In this proof, we inductively construct a set $X_a \cup Y_a \subset \Z^2$ by taking multiple copies 
 			of $X \cup Y$ and prove by induction
 			that no three points from $\pi_{p^a, p^{(a-1)p+2}} (X_a \cup Y_a)$ are collinear on the torus $\T{p^a}{p^{(a-1)p+2}}$.

 			First, let us define the set $X_a \cup Y_a$.
 			\begin{enumerate}
 				\item For $a = 1$ we define $X_a := X$ and $Y_a := Y$.
 				\item For $a > 1$ we take $p$ copies of the set $X_{a-1} \cup Y_{a-1}$ on $\Z^2$ so that 
 				\begin{align*}
 					X_{a} &:= X_{a-1} \cup \bigcup_{i=1}^{p-1} X_{a-1} + \left(ip^{a-1}, p^{(a-2)p + i + 3}\right), \\
	 				Y_{a} &:= Y_{a-1} \cup \bigcup_{i=1}^{p-1} Y_{a-1} + \left(ip^{a-1}, p^{(a-2)p + i + 3}\right).
 				\end{align*}
 			\end{enumerate}		
 			\begin{figure}
				\centering
				\includegraphics[scale=1]{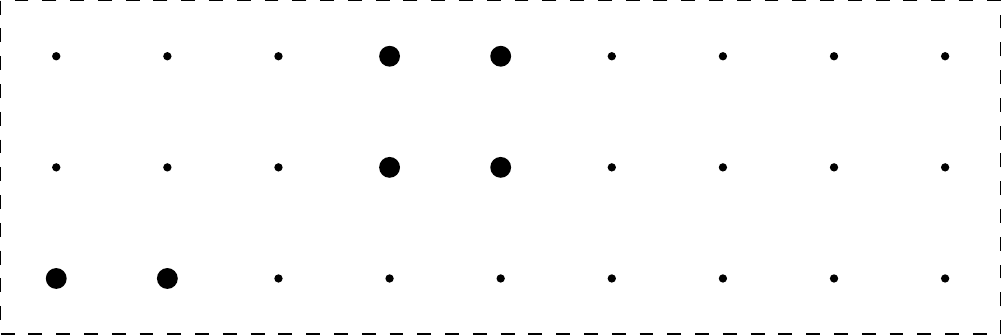}
				\\[4mm]
				\includegraphics[scale=1]{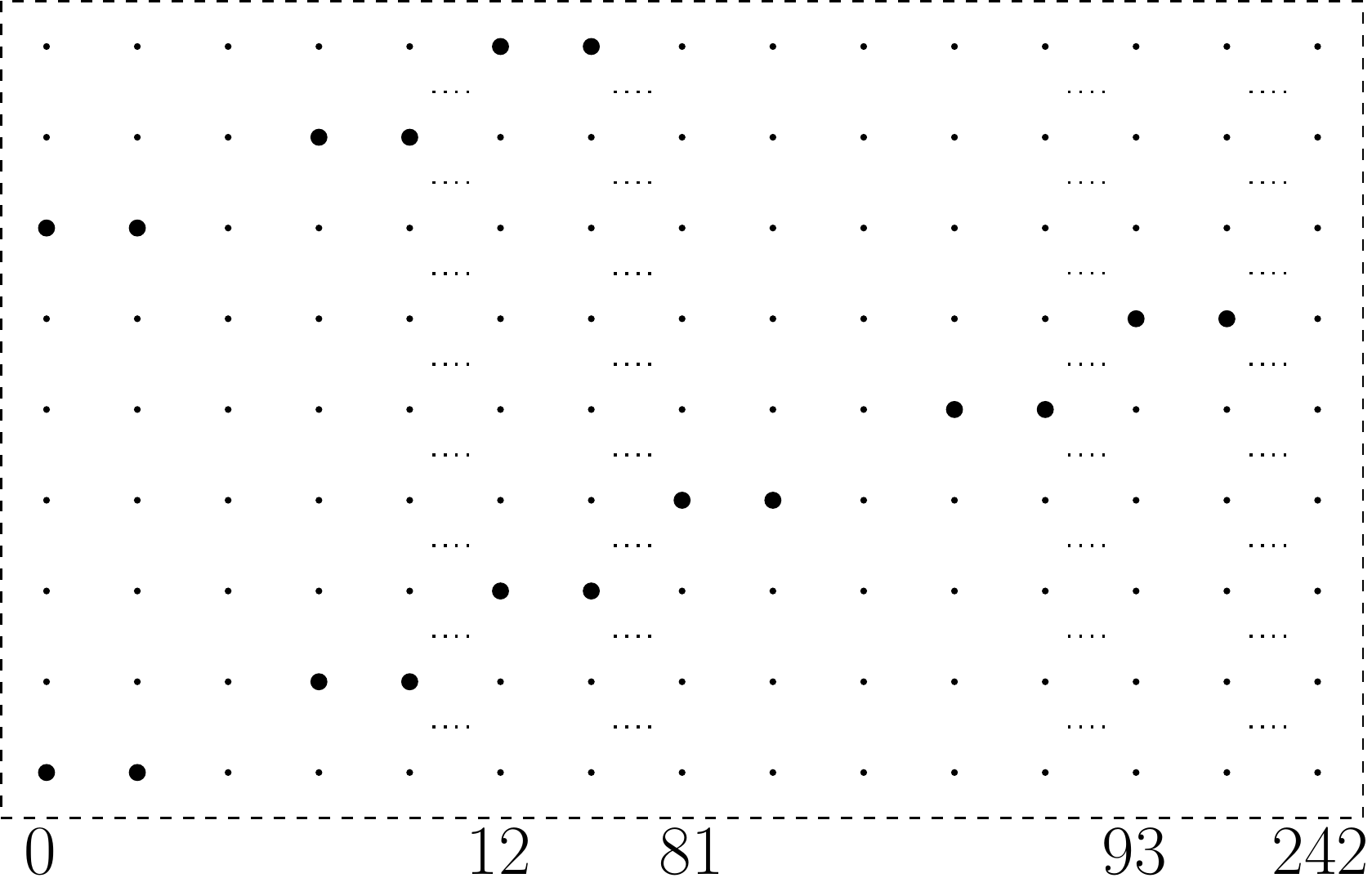}
				\caption{An example of the construction of $\pi_{p^a, p^{(a-1)p+2}} (X_{a} \cup Y_{a})$ from Theorem~\ref{max_p} for $p = 3$, $a = 1$ (see the upper picture) and $a=2$ (see the lower picture).}
				\label{ex1XY}	
			\end{figure}	
			\begin{figure}
				\centering
				\includegraphics[scale=1.2]{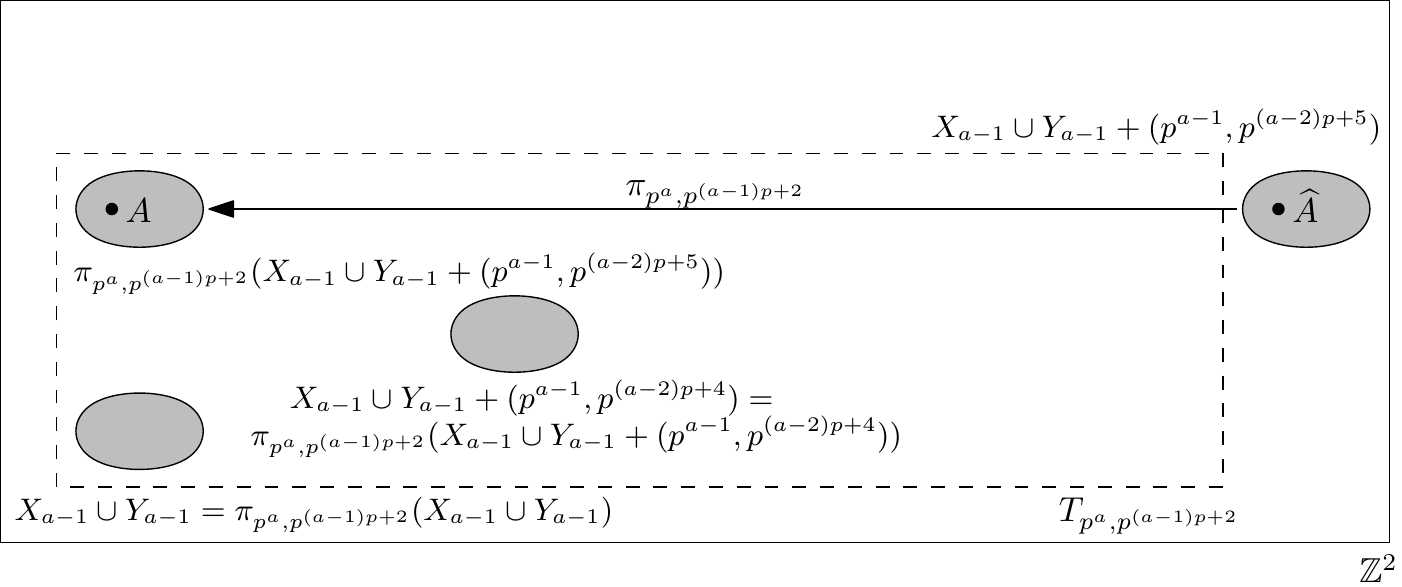}
				\caption{A schematic picture of the construction of $\pi_{p^a, p^{(a-1)p+2}} (X_{a} \cup Y_{a})$
				from Theorem~\ref{max_p} for $p = 3$ and an example of the preimage $\widehat{A}$ of a point $A
				\in \pi_{p^a, p^{(a-1)p+2}} (X_{a} \cup Y_{a})$.} \label{fig:schematicExample}
			\end{figure}
			As an example of the construction see Figures~\ref{ex1XY} and \ref{fig:schematicExample}. 						
			For a point $A \in \pi_{p^a, p^{(a-1)p+2}}(X_a \cup Y_a)$ on $\T{p^a}{p^{(a-1)p+2}}$ we call the point $\widehat{A} \in X_a \cup Y_a$ such that
			$\pi_{p^a, p^{(a-1)p+2}}(\widehat{A}) = A$ the \emph{preimage} of $A$ (see Figure~\ref{fig:schematicExample}).
 			Now, we inductively prove that $\pi_{p^a, p^{(a-1)p+2}}(X_a \cup Y_a)$ does not contain three collinear points.
 			Case $a = 1$ was proved by Misiak et al. (see Theorem 1.2(3a) in \cite{Misiak}).
 			 
 			Let us assume $a > 1$ and
 			let	$A,B,C \in \pi_{p^a,p^{(a-1)p+2}}(X_a \cup Y_a)$. We have to consider several cases. 
      		\smallskip
      		
      		First, let us assume that $\pi_{p^{a-1}, p^{(a-2)p+2}}(A), \pi_{p^{a-1}, p^{(a-2)p+2}}(B), \pi_{p^{a-1}, p^{(a-2)p+2}}(C)
      		\in X_{a-1} \cup Y_{a-1}$ are three distinct points.
      		In other words, the preimages of $A,B,C$ in $X_a \cup Y_a$ are copies of three distinct points from $X_{a-1} \cup Y_{a-1}$.
			If $A,B,C$ are collinear on $\T{p^a}{p^{(a-1)p+2}}$, then $\pi_{p^{a-1}, p^{(a-2)p+2}}(A)$, $\pi_{p^{a-1}, p^{(a-2)p+2}}(B)$,
			$\pi_{p^{a-1}, p^{(a-2)p+2}}(C)$ are
			collinear on $\T{p^{a-1}}{p^{(a-2)p+2}}$ as well which contradicts the~induction hypothesis.
			 			
 			Now, we assume $A$, $B$, $C$ satisfy $\pi_{p^{a-1},p^{(a-2)p+2}}(A) = \pi_{p^{a-1}, p^{(a-2)p+2}}(B)$. In other words,
 			the preimages of $A$ and $B$ are copies of the same point from $X_{a-1} \cup Y_{a-1}$.
 			\bigskip
 			
 			In the first case, we assume that the preimages of $A,B$ are from $X_a$ and the preimage of $C$ is from $Y_a$.
 			Therefore,
 			\begin{align*}
 				A &= (k + t_1p, k^2p + t_2p), \\
 				B &= (k + t_1p + up^{a - 1}, k^2p + t_2p+ xp^q), \\
 				C &= (m +v_1p, m^2p + v_2p + 1)
 			\end{align*}
 			for some $k,m \in \{0, \ldots, p-1\}$, $u \in \{1, \ldots, p-1\}$, 
 			for suitable $t_1, t_2, v_1, v_2, x \in \N_0$, $q \geq (a-2)p + 4 \geq a$ and where $p$ does not divide $x$.
 			Note that $u > 0$; otherwise $A=B$.
 			We may use the~translation given by the vector $(-t_1p, -t_2p)$ and we get
 			\begin{align*}
 				A_t &= (k, k^2p), \\
 				B_t &= (k + up^{a - 1}, k^2p + xp^q), \\
 				C_t &= (m + s_1p, m^2p + s_2p + 1),
 			\end{align*}
 			where $s_1 = v_1 - t_1 \mod p^a$ and $s_2 = v_2 - t_2 \mod p^{(a-1)p+2}$. The points $A_t, B_t, C_t$ are collinear if and only if
 			$A,B,C$ are collinear. 
 			We compute $D(A_t,B_t,C_t)$.
 			\begin{align*}
 				D(A_t,B_t,C_t) &=D((k, k^2p), (k + up^{a - 1}, k^2p +xp^q), (m +ps_1, m^2p + ps_2 + 1)) \\
 				&= kxp^q+m^2up^a+s_2up^a+up^{a-1}-mxp^q-s_1xp^{q+1} - k^2up^a \\
 				&= p^{a-1}(u + pR)
 			\end{align*}
 			  %original computation with mistakes
				%&= kxp^q +up^a + up^{a-1} + up^{a}s_2 - mxp^q  - xp^{q+1}s_1 - up^a \\
				%&=p^{a-1}(u + pR)
 			
 			for some $R \in \Z$. Therefore, $D(A_t,B_t,C_t) \not\equiv 0 \pmod{p^a}$ since $u >0$.
 			Consequently, $A,B,C$ are not collinear on $\T{p^a}{p^{(a-1)p+2}}$ by Lemma~\ref{lem_det} (2).
 			
 			Similarly, we check the case when the preimages of $A,B$ are from $Y_a$ and the preimage of $C$ from $X_a$.
 			Hence
 			\begin{align*}
 				A &= (k + t_1p , k^2p + t_2p + 1),\\
				B &= (k + t_1p + up^{a - 1}, k^2p + t_2p + xp^q + 1),\\							
	    	C &= (m + v_1p, m^2p + v_2p).
 			\end{align*}
 			We use the translation given by the vector $(-t_1p, -t_2p)$ as we did in the previous case:
 			\begin{align*}
 				A_t &= (k, k^2p + 1),\\
				B_t &= (k + up^{a - 1}, k^2p + xp^q + 1),\\							
	    	C_t &= (m + s_1p, m^2p + s_2p)
 			\end{align*}
 			and we get
 			\begin{align*}
 			  D(A_t,B_t,C_t) &= kxp^q + m^2up^a + s_2up^a - up^{a-1} - mxp^q - s_1p^{q+1} - k^2up^a \\
 			  &=p^{a-1}(-u + pR).
 			\end{align*}
 			%original computation with mistakes
 			%D(A_t,B_t,C_t) &= k^2up^{a} + up^{a-1} + lxp^q + s_1k^2p^a + s_1xp^{q+1} -up^al^2 -up^{a}s_2 -xp^qk \\
			%&=p^{a-1}(u + pR).
 			Therefore, $A,B,C$ are not collinear on $\T{p^a}{p^{(a-1)p+2}}$ by the same lemma.
 			
 			\bigskip
 			Now, we check the  case when the preimages of all three points are from $X_a$.
 			The case where they are all from $Y_a$ is just a translation.
	 		We can express the points as follows:
	 		\begin{align*}
				A &= (k + s_1, k^2p + s_2),\\
				B &= (k + s_1 + up^{a-1}, k^2p + s_2 + xp^q),\\
				C &= (m + s_3, m^2p + s_4),
			\end{align*}
			where $p$ divides $s_1, s_3$ and $p^4$ divides $s_2, s_4$ by the definition of the construction (the first iteration
			for $a = 2$), $q\in \{p^{(a-2)p + 4}, \ldots, p^{(a-1)p + 2}\}$, $u \geq 1$ is not divisible by $p$, $x \in \{0,1\}$ and
			$k,m \in \{0, \ldots,  p-1\}$.
			\begin{enumerate}
				\item First, we assume $m \neq k$. Let us map these points to the smaller torus $\T{p^{a}}{p^{(a-2)p + 4}}$
				using $\pi_{p^{a},p^{(a-2)p + 4}}$ and let $A'$, $B'$, $C'$ be the images.
				Then
				\begin{align*}
					A' &= (k + s_1, k^2p + s_2'),\\
				  	B' &= (k + s_1 + up^{a-1}, k^2p + s_2'),\\
				  	C' &=(m + s_3, m^2p + s_4'),
				\end{align*} 	
				where $p^4$ divides $s_2', s_4'$. Let us check the line between $A'$ and $C'$. Since $((k - m) + s_1 - s_3)$ 
				is not divisible by $p$, Corollary~\ref{cor:lines} implies there is exactly one line between them and  its length is
				$d = \max \{p^a,\frac{p^{(a-2)p + 4}}{\gcd(p^{(a-2)p + 4},p(k-m)(k+m) + s_2' - s_4')}\} = \max\{p^a, p^{(a-2)p + 4 - h}\}$, where 
				$h \in \{1,2\}$ (the sum of $k$ and $m$ may be $p$). 
								
				Now, let us map $A', B', C'$ to the torus $\T{p^{a-1}}{p^{(a-2)p + 4}}$ and let $A''$, $B''$, $C''$ be the images.
				We can express them as
				\begin{align*}
					A'' &= (k + s_1', k^2p + s_2'),\\
					B'' &= A'',\\
	 				C'' &= (m + s_3', m^2p + s_4'),
				\end{align*}
				where $p$ divides $s_1'$ a $s_3'$.
				Again, by Corollary~\ref{cor:lines} there is exactly one line between $A''$ and $C''$ and its length is
				$d^\prime = \max\{p^{a-1}, p^{(a-2)p + 4 - h}\}$. Since $(a-2)p + 4 - h \geq a$ for $a > 1$,
				$d' = d$ and $A', B', C'$ are not collinear by Lemma~\ref{lastP_l3}. Therefore, $A,B,C$ 
				are not collinear on $\T{p^a}{p^{(a-1)p+2}}$.

				\item 
				Let us check the case when $k=m$. That means $\pi_{p,p^2}(A) = \pi_{p,p^2}(C)$. 
				
				First, let us assume that 
				$\pi_{p^{a-1},p^{(a-2)p+2}}(A) \neq \pi_{p^{a-1},p^{(a-2)p+2}}(C)$. In other words, the preimages of
				all three points are copies of the same point from $X$ but they are not copies
				of the same point from $X_{a-1}$. 
				We have
				\begin{align*}
					A &= (k + s_1, k^2p + s_2),\\
					B &= (k + s_1 + up^{a-1}, k^2p + s_2 + xp^q),\\
					C &= (k + s_3, k^2p + s_4),
				\end{align*}
				where $p$ divides $s_1, s_3$ and $p^4$ divides $s_2, s_4$
				 by the definition of the~construction (the first iteration
				for $a = 2$), $q\in \{p^{(a-2)p + 4}, \ldots, p^{(a-1)p + 2}\}$, $u \geq 1$ is not divisible by $p$, $x \in \{0,1\}$
				and $k,m \in \{0, \ldots,  p-1\}$. We can see that
				$p^{a-1}$ does not divide $s_1$ and $p^{(a-2)p + 4}$ does not divide $s_2$ or
				$p^{a-1}$ does not divide $s_3$ and $p^{(a-2)p + 4}$ does not divide $s_4$; otherwise
				$\pi_{p^{a-1},p^{(a-2)p+2}}(A) = \pi_{p^{a-1},p^{(a-2)p+2}}(C)$. Now, we use the translation
				given by the vector $(-k - s_1, -k^2p - s_2)$ and we get
				\begin{align*}
					A_t &= (0,0),\\
					B_t &= (up^{a-1}, xp^q),\\
					C_t &= (vp^t, yp^r),
				\end{align*}
				where $u,x,v,y$ are not divisible by $p$, $t \leq a-2$, $r\leq (a-2)p + 2$.
				Let us have a look at the images of these points on the torus $\T{p^{a}}{p^{(a-2)p + 4}}$.
				We get
				\begin{align*} 
					A' &= (0,0),\\
					B' &= (up^{a-1}, 0),\\
					C' &= (vp^t, y'p^r)
				\end{align*}
				for suitable $y'$ which is not divisible by $p$.
				Let us compute the length of the lines between $A'$ a $C'$. By Lemma~\ref{lem_more_lines},
				every such line passing through some point $(w_1, w_2)$ satisfies
				\begin{align*}
					w_1 &\equiv v \pmod{p^{a - t}}, \\
					w_2 &\equiv y'p^{r-t} \pmod{p^{(a-2)p + 2 -t}}.
				\end{align*}
				By Corollary~\ref{cor:lines}, there is exactly one line between the origin and such $(w_1, w_2)$ and its length is	
				$\max\{p^a, \frac{p^{(a-2)p + 4}}{p^{r-t}}\}$.
				Therefore, each line between $A'$ and $C'$ has the length $d = \max\{p^a, \frac{p^{(a-2)p + 4}}{p^{r-t}}\}$.
				
				Let us map $A^\prime, B^\prime, C^\prime$ to the~torus $\T{p^{a-2}}{p^{(a-2)p + 4}}$
				and let $A^{\prime\prime}, B^{\prime\prime}, 
				C^{\prime\prime}$ be the images. We get
				\begin{align*}
					A'' &= A', \\
					B^{\prime\prime} &= A^{\prime\prime} \\
					C'' &=  (v'p^t, y'p^r)
				\end{align*}
				for suitable $v'$ which is not divisible by $p$. Similarly, the length $d'$ of each line between $A''$ and 
				$C''$ is $d' = \max\{p^{a-1}, \frac{p^{(a-2)p + 4}}{p^{r-t}}\}$ by Corollary~\ref{cor:lines}.
				Note that the maximal $r - t$ satisfying the definition of the construction equals
				$\max_{x \in \{1, \ldots, a-2\}}(xp +2 - x) = (a-2)(p-1) + 2$.
				Hence $r-t \leq (a-2)p - a + 4$ and  $(a-2)p + 4 - (r - t) \geq a$. Consequently, $d =d'$ and
				Lemma~\ref{lastP_l3} implies that $A', B', C'$
				are not collinear and, therefore, $A,B,C$ are not collinear on $\T{p^a}{p^{(a-1)p+2}}$.
	
				Finally, we check the case when all three $A,B,C$ points are copies of one point
				from $X_{a-1}$ which is without loss of generality the origin.
				Otherwise, we use a translation similarly to the previous cases.
				Therefore,
				\begin{align*}
					A=(up^{a-1}, p^q), \\
					B = (vp^{a-1}, p^r),\\
					C =(wp^{a-1}, p^s)
				\end{align*}
				for distinct $q,r, s \in \{p^{(a-2)p + 4}, \ldots, p^{(a-1)p + 2}\}$
				and for distinct $u,v,w \in \{0, \ldots, p-1\}$ satisfying the definition of the~construction.
				Let us denote $b :=(a-1)p + 2$ and $D:= D((up^{a-1} + \alpha_1 p^a, p^q +\alpha_2 p^b), 
				(vp^{a-1} + \beta_1 p^a, p^r + \beta_2 p^b), (wp^{a-1} + \gamma_1 p^a, p^s + \gamma_1 p^b))$
				the determinant for $A,B,C$ from Lemma~\ref{lem_det}~(1), where 
				$\alpha_1, \alpha_2, \beta_1, \beta_2, \gamma_1, \gamma_2 \in \Z$. 
				Then
				\begin{align*}
					D &= D((up^{a-1}, p^q),  (vp^{a-1}, p^r), (wp^{a-1}, p^s))
					+ p^{a}D((\alpha_1, p^q), (\beta_1, p^r), (\gamma_1, p^s))\\
					&+p^bD((up^{a-1}, \alpha_2),(vp^{a-1}, \beta_2),(wp^{a-1}, \gamma_2)) 
					+ p^{a+b}D((\alpha_1,\alpha_2), (\beta_1, \beta_2), (\gamma_1, \gamma_2)) \\ 
					&= p^{a+r-1}(u-w) + p^{a+q-1}(w-v) + p^{a+s-1}(v-u) \\
					&+ p^{a}(p^r(\alpha_1 - \gamma_1) +p^q(\gamma_1 - \beta_1) 
					+p^s(\beta_1 - \alpha_1)) \\
					&+ p^{a+b-1}Q + p^{a+b}W,
				\end{align*}
				where $Q = D((u, \alpha_2),(v, \beta_2),(w, \gamma_2))$ and
				$W= D((\alpha_1,\alpha_2), (\beta_1, \beta_2), (\gamma_1, \gamma_2))$.
				Without loss of generality let $q$ be the smallest number of $\{q,r,s\}$.
				Then indeed $q < b$ and thus we get $D =p^{a+q-1}((w-v) + pH)$ for suitable $H \in \Z$.
				Therefore, $D \neq 0$ and $A,B,C$ are not collinear on $\T{p^a}{p^{(a-1)p+2}}$
				by Lemma~\ref{lem_det}~(1) and we are done.
			\end{enumerate}
	 	\end{proof}		
 	
\section*{Acknowledgement}
	I would like to thank my supervisor Martin Tancer for all his support and advice. I would also like
	to thank the anonymous reviewers for valuable advice and remarks.

\bibliographystyle{alpha} 
\renewcommand{\bibname}{References}
\newcommand{\etalchar}[1]{$^{#1}$}

\end{document}